\documentclass[envcountsame]{llncs}
\usepackage{amssymb}
\usepackage{amsmath}
\usepackage{mathtools}
\usepackage{relsize}

\usepackage{framed}

\usepackage{url}
\usepackage{algpseudocode}
\usepackage{algorithm}

\usepackage{enumitem}
\usepackage{xcolor}
\usepackage{comment}
\usepackage{tikz}
\usepackage{epsfig}
\usetikzlibrary{shapes,backgrounds, patterns}

\colorlet{bscolor}{green}
\colorlet{skcolor}{orange}

\newcommand{\Omit}[1]{}

\newcommand{\cf}{CF }
\newcommand{\cfcn}{CF-CN }

\begin{document}

\title{Conflict-Free Coloring on Open Neighborhoods}
\author{
Sriram Bhyravarapu
\and
Subrahmanyam Kalyanasundaram
}
\institute{Department of Computer Science and Engineering, IIT Hyderabad \\
\email{\{cs16resch11001, subruk\}@iith.ac.in}
}

\maketitle

\begin{abstract}
In an undirected graph, a conflict-free coloring 
(with respect to open neighborhoods) 
is an assignment of colors to the vertices of the graph $G$ such that 
every vertex in $G$ has a uniquely colored vertex in its open 
neighborhood. 
The conflict-free coloring problem asks to find the smallest number of 
colors required for a conflict-free coloring.

The conflict-free coloring problem is NP-complete. From results in Abel et. al. [SODA 2017],
it can be inferred that every planar graph has a conflict-free coloring with at most nine 
colors.
As the best known lower bound for planar graphs is four colors, 
it was asked in the same paper if fewer colors would  suffice. 
We make progress in answering this question, by showing
that every planar graph can be colored using at most six colors.
The same proof idea is used to show that every outerplanar
graph can be colored using at most five colors. 
Using a different approach, we further show that every outerplanar 
graph can be colored using at most four colors. 

Finally, we study the problem on Kneser graphs. We show that $k+2$ colors are necessary and sufficient to color the Kneser graph 
$K(n,k)$ when $n\geq k(k+1)^2 + 1$.
\end{abstract}

\section{Introduction}
A \emph{proper coloring} of a graph is an assignment of a color to every vertex of the graph such that adjacent vertices 
are of distinct colors.
Conflict-free coloring is a variant of the graph coloring problem. 
A conflict-free coloring of a graph $G$ is a coloring such that for every vertex in $G$, there exists a uniquely colored vertex in its neighborhood. 
This problem was first introduced in 2002 by Even, Lotker, Ron and Smorodinsky \cite{Even2002}. 
This problem was originally motivated by wireless communication systems consisting of base stations and clients. The clients and base stations have to send each other
information and hence they communicate with each other.
Each base station is assigned a frequency and if two base stations with 
the same frequency try to communicate with a client, it leads to interference. 
So for each client, there has to be a base station with a unique frequency. 
Since each frequency band is expensive, there is a need 
to minimize the number of frequencies used 
by the base stations. Over the past two decades, this problem has been very well studied, see for instance
the survey by Smorodinsky \cite{smorosurvey}.

The conflict-free coloring problem has been studied with respect to open  neighborhoods as well as closed neighborhoods.
In this paper, we study the conflict-free coloring problem with respect to open neighborhoods. 

\begin{definition}[Conflict-Free Coloring]\label{def:cf}
A \emph{complete conflict-free (CF) coloring} of a graph $G = (V,E)$ using $k$ colors is an 
assignment $C:V(G) \rightarrow \{1, 2, \ldots, k\}$ such that for every $v \in V(G)$, 
there exists an $i \in \{ 1, 2, \ldots, k\}$ such that  $|N(v) \cap C^{-1}(i)| = 1$.
The smallest number of colors required for a complete conflict-free coloring
of $G$ is called the conflict-free chromatic number of $G$, denoted by $\chi_{CF}(G)$.
\end{definition}

The conflict-free coloring problem and many of its variants are known to be 
NP-complete~\cite{planar,gargano2015}. It was further shown in \cite{gargano2015} that the \cf coloring problem on open 
neighborhoods is hard to approximate within a factor of $n^{1/2 - \varepsilon}$,
unless P = NP. 
Since the problem is NP-hard, the parameterized aspects of the problem 
have been studied. 
The problems are fixed parameter tractable when 
parameterized by vertex cover number, neighborhood 
diversity~\cite{gargano2015}, distance to cluster, distance to threshold graphs~\cite{vinod2017}, and more recently, tree-width~\cite{Boe2019,Sak2019}.

In this paper, we look at the conflict-free open neighborhood problem, which 
is considered as the harder of the open and closed neighborhood variants, see
for instance, remarks in \cite{Pach2009,Keller2018}. It is easy to construct examples
of bipartite graphs $G$, for which $\chi_{CF}(G)$ is $\Theta(\sqrt{n})$. Since any proper 
coloring is also a valid conflict-free closed neighborhood coloring, these examples
have a \cf closed neighborhood coloring using two colors. Further, Cheilaris 
\cite{pcheilaris2009} showed that for every graph $G$, we have $\chi_{CF}(G) \leq 2 \sqrt{n}$. 
On the contrary, a graph with maximum degree $\Delta$ has a conflict-free closed neighborhood
coloring with at most $O(\log^{2 + \varepsilon} \Delta)$ colors \cite{Pach2009}.

Restrictions of the conflict-free coloring problem to special classes of graphs 
have been studied extensively. Of these, graphs arising out of intersection of 
geometric objects have attracted special interest, see for instance, \cite{Keller2018,sandor2017,chen2005}. 
The problem has also been studied for structural classes of 
graphs such as bipartite graphs and split graphs~\cite{vinod2017}.

In~\cite{planar}, Abel et. al. considered the partial coloring variant of the problem where
not all vertices need to be assigned a color. 
\begin{definition}[Partial Conflict-Free Coloring]
A \emph{partial conflict-free (CF) coloring} of a graph $G= (V,E)$ using $k$ colors is an 
assignment $C:V(G) \rightarrow \{0, 1, 2, \ldots, k\}$ such that for every $v \in V(G)$, 
there exists an $i \in \{1, 2, \ldots, k\}$ such that  $|N(v) \cap C^{-1}(i)| = 1$.
\end{definition}
Let us refer to the variant of the problem that we originally stated in Definition 
\ref{def:cf}, where all
the vertices have to be colored, as the complete coloring variant. 
The key difference between partial \cf coloring and complete \cf coloring is that in the partial variant, we allow some
vertices to be assigned the color 0. It is convenient to think of the vertices 0 
as uncolored vertices. However, the uniquely colored neighbor is not allowed to be
of color 0. 
If a graph can be colored using $k$ colors in the 
partial coloring variant, then all the uncolored vertices can be assigned 
the color $k+1$, and thus a $k+1$ complete coloring for the same graph can be 
obtained.

For the partial coloring variant,
eight colors suffice to color a planar graph~\cite{planar}. 
It is easy to construct a planar graph that requires four colors. Starting with $K_4$,  each original edge is subdivided by introducing a 
degree-two vertex on this edge. In addition, a pendant vertex is attached to every original 
vertex of the $K_4$. This graph (see Figure \ref{fig:k4}) is planar and
requires four colors. 
One of the open questions asked in \cite{planar} was to close the gap between the upper bound 
of eight and lower bound of four for the partial coloring variant of the 
conflict-free chromatic number of a planar graph. 
In this paper 
we reduce this gap, by showing that five colors suffice for the partial \cf
coloring of a planar graph. Using the same proof idea, we show that four colors
are enough for the partial coloring of an outerplanar graph. The lower bound
for the partial \cf coloring of an outerplanar graph is three, see Figure \ref{fig:k3}. 
\begin{figure}[t!]
\centering
\begin{minipage}{0.4\textwidth}
\centering
\begin{tikzpicture}[every node/.style={node distance=1.5cm,scale=0.5}, scale = 0.8]
\tikzstyle{vertex}=[circle,draw, minimum size=3pt]
\tikzstyle{edge} = [draw,thick,-,black]
\node[vertex]  (u1) at (4,2) {};
\node[vertex]  (u2) at (6,2) {};
\node[vertex]  (u3) at (6,0) {};
\node[vertex]  (u4) at (4,0) {};
\node[vertex]  (u5) at (3,3) {};
\node[vertex]  (u6) at (7,3) {};
\node[vertex]  (u7) at (4.7,0.7) {};
\node[vertex]  (u8) at (3,-1) {};
\node[vertex]  (u9) at (5,2) {};
\node[vertex]  (u10) at (4,1) {};
\node[vertex]  (u11) at (6,1) {};
\node[vertex]  (u12) at (5,0) {};
\node[vertex]  (u13) at (5.3,1.3) {};
\node[vertex]  (u14) at (7,-1) {};
\draw[edge, color=black] (u1) -- (u5) (u1) -- (u9) (u1) -- (u10) (u1) -- (u13) ;
\draw[edge, color=black] (u2) -- (u9) (u2) -- (u11) (u2) -- (u6) (u2) -- (u14) ;
\draw[edge, color=black] (u3) -- (u7) (u3) -- (u11) (u3) -- (u13) (u3) -- (u12) ;
\draw[edge, color=black] (u4) -- (u12) (u4) -- (u10) (u4) -- (u8) (u4) -- (u14) ;
\end{tikzpicture}
\caption{Graph that requires 4 colors for a partial CF coloring.}\label{fig:k4}
\end{minipage}\hspace{0.1\textwidth}
\begin{minipage}{0.4\textwidth}
\centering
\begin{tikzpicture}[every node/.style={node distance=1.5cm,scale=0.5}, scale = 0.8]
\tikzstyle{vertex}=[circle,draw, minimum size=3pt]
\tikzstyle{edge} = [draw,thick,-,black]
\node[vertex]  (v1) at (0,2) {};
\node[vertex]  (v2) at (-1,0) {};
\node[vertex]  (v3) at (1,0) {};
\node[vertex]  (v4) at (-0.5,1) {};
\node[vertex]  (v5) at (0.5,1) {};
\node[vertex]  (v6) at (0,0) {};
\node[vertex]  (v7) at (0,3) {};
\node[vertex]  (v8) at (-2,-1) {};
\node[vertex]  (v9) at (2,-1) {};
\draw[edge, color=black] (v7) -- (v1) (v1) -- (v4);
\draw[edge, color=black] (v1) -- (v5) (v2) -- (v4);
\draw[edge, color=black] (v2) -- (v6) (v2) -- (v8);
\draw[edge, color=black] (v6) -- (v3) (v5) -- (v3) (v9) -- (v3);
\end{tikzpicture}
\caption{Graph that requires 3 colors for a partial CF coloring.}\label{fig:k3}
\end{minipage}
\vspace{-4mm}
\end{figure}
 
As noted before, a partial coloring of an outerplanar graph using four colors implies
a complete coloring using five colors. Using a different approach, we show that 
 four colors are sufficient for a complete \cf coloring of an outerplanar graph. 

The last section in this paper studies the
conflict-free coloring on Kneser graphs. The Kneser graph $K(n,k)$ is the graph whose vertices are 
$k$-subsets of $[n]$, and two such vertices are adjacent if and only if the corresponding sets are disjoint. 
Several properties of Kneser graphs have been subject to study.
The chromatic number of the Kneser graph $K(n,k)$ was conjectured by Kneser \cite{kneser} in 1955 to be $n-2k+2$. This remained open till 
Lov\'asz proved \cite{lovasz} the conjecture in 1978. When $n \geq k(k+1)^2 + 1$, we determine the exact conflict-free chromatic number of the Kneser graph $K(n,k)$.


We summarize our results in this paper below:
\vspace{-0.1in}
\begin{enumerate}
    \item Five colors are sufficient 
    for the partial conflict-free 
    coloring of a planar graph. This improves the
    previous best known bound of~\cite{planar} that 
    required eight colors.
    
    Four colors are sufficient for the partial
    conflict-free coloring of an outerplanar graph. 
    These two results are
    discussed in Section \ref{sec:planar}.
    \item Four colors suffice for 
    the complete conflict-free 
    coloring of an outerplanar graph.
    Moreover, three colors are sufficient and sometimes necessary 
    for a complete conflict-free coloring of cactus graphs. 
    These results are shown in Section \ref{sec:outer}.
    \item In Section \ref{sec:kneser}, we compute bounds on the
    conflict-free coloring of Kneser graphs. We also determine that 
    the $\chi_{CF}(K(n,k)) = k + 2$ when $n \geq k(k+1)^2 + 1$.
\end{enumerate}{}
\section{Preliminaries}\label{sec:prelim}
For any two vertices $u,v\in G$, we denote the shortest distance between them in $G$ by $dist(u,v)$. 
The open neighborhood of $v$, denoted by $N(v)$, is the set of vertices adjacent to $v$. 
We denote the graph induced by a set of vertices $V'$  in $G$ as $G[V']$. 

In this paper,
we consider only connected graphs with at least two vertices
because the colorings of the connected components can combine to
give a coloring of the graph. 
Also, an isolated vertex does not have a conflict-free coloring (in the open neighborhood setting)
since there are no neighbors.

A \emph{planar graph} is a graph that can be drawn in $\mathbb R^2$ (a plane) such 
that the
edges do not cross each other in the drawing.
Each such drawing divides the plane into regions and each region is called a 
\emph{face}. A planar drawing of a graph has one face that is unbounded. This face is called the \emph{outer face}. All the other faces are referred to as
\emph{inner faces}.
An \emph{outerplanar graph} is a planar graph that has a drawing in a plane such 
that all the vertices of the graph belong to the outer face. 
Throughout the paper, we use terminology from the textbook ``Graph Theory'' by Diestel \cite{Diestel}.
\section{Partial CF Coloring of Planar Graphs}\label{sec:planar}
In~\cite{planar} Abel et. al., showed that eight colors are sufficient for the
partial \cf 
coloring of a planar graph. 
In this section, we improve the bound to five colors. 


We need the following definition:
\begin{definition}[Maximal Distance-3 Set]
For a graph $G= (V,E)$, a \emph{maximal distance-3 set} is a set $S \subseteq V(G)$
that satisfies the following:
\begin{enumerate}
    \item For every pair of vertices $w, w' \in S$, we have $dist(w,w') \geq 3$.
    \item For every vertex $w \in S$,  $\exists w' \in S$ such that
            $dist(w, w') = 3$.
    \item For every vertex $x \notin S$, $\exists x' \in S$ such 
            that $dist(x, x') < 3$.
\end{enumerate}
\end{definition}

The set $S$ is constructed by initializing $S=\{v\}$ where $v$ is 
an arbitrary vertex. We proceed in iterations. In each iteration, we add 
a vertex $w$ to $S$ if (1) for every $v$ already in $S$, $dist(v,w) \geq 3$,
and (2) there exists a vertex $w'\in S$ such that $dist(w,w') = 3$.
We repeat this 
until no more vertices can be added. 


The main component of the proof is the construction of an auxiliary graph 
$G'$ from the given graph $G$. 

\vspace{1mm}
\noindent\textbf{Construction of $G'$:} 
The first step is to pick a maximal distance-3 set $V_0$.
Notice that any distance-3 set is an independent set by definition.
We let $V_1$ denote the neighborhood of $V_0$. More formally,
$V_1 = \{ w\mid \{w, w'\} \in E(G), w' \in V_0\}$.
Let $V_2$ denote the remaining vertices i.e., 
$V_2=V\setminus (V_0 \cup V_1)$. 

We note the following properties satisfied by the above partitioning of $V(G)$.
\begin{enumerate}
    \item The set $V_0$ is an independent set.
    \item For every vertex $w \in V_1$, there exists a unique vertex $w' \in V_0$ such that $\{w, w'\} \in E(G)$. This is because if there are two such vertices, this will violate
    the distance-3 property of $V_0$.
    \item Every vertex in $V_0$ has a neighbor in $V_1$. If there exists $v \in V_0$
            without a neighbor in $V_1$, then $v$ is an isolated vertex. By assumption,
            $G$ does not have isolated vertices.
    \item There are no edges from $V_0$ to $V_2$.
    \item Every vertex in $V_2$ has a neighbor in $V_1$, and is hence at distance 2 from some vertex in $V_0$. This is due to the maximality of the distance-3 set $V_0$. 
\end{enumerate}

Now we define $A=V_0\cup V_2$. 
We first remove all the edges of $G[V_2]$ making $A$ an independent set.
We then contract every vertex $v\in A$ to a neighbor $f(v) \in N(v) \subseteq V_1$. 
The contraction process is as follows: we first identify vertex $v$ with $f(v)$. 
Then for every edge $\{v, v'\}$, we add an edge $\{f(v),v'\}$.
The resulting graph is $G'$. 


\begin{theorem}\label{graphon}
Five colors are sufficient to partial \cf 
color a planar graph.
\end{theorem}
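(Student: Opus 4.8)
The plan is to apply the Four Color Theorem to the auxiliary graph $G'$ and then transfer the resulting proper coloring back to a partial CF coloring of $G$, spending one extra color on $V_0$.

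First I would observe that $G'$ is planar: it is obtained from $G$ by deleting the edges inside $V_2$ and then contracting, for each $v \in A = V_0 \cup V_2$, the edge $\{v, f(v)\}$; edge deletions and contractions preserve planarity, and since $A$ is independent after the deletions these contractions produce no loops. The vertex set of $G'$ is exactly $V_1$, because properties (3) and (5) of the partition guarantee that $f$ is well defined on all of $A$. Ignoring multi-edges, $G'$ is a simple planar graph, so the Four Color Theorem gives a proper coloring $c : V_1 \to \{1,2,3,4\}$.

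Next I would define a partial coloring $C$ of $G$ by setting $C(v) = 5$ for $v \in V_0$, $C(w) = c(w)$ for $w \in V_1$, and $C(u) = 0$ for $u \in V_2$, and check the conflict-free condition in two cases. For $w \in V_1$: by property (2), $w$ has exactly one neighbor in $V_0$, which has color $5$, and every other neighbor of $w$ lies in $V_1 \cup V_2$ and so has a color in $\{0,1,2,3,4\}$; hence $w$ sees color $5$ exactly once. For $x \in V_0 \cup V_2$, write $w = f(x)$: by properties (1) and (4) every colored neighbor of $x$ lies in $V_1$ (the neighbors of $x$ in $V_2$, if any, are uncolored), and one of these is $w$; and for any other neighbor $w' \in V_1$ of $x$, contracting $x$ into $w$ turns the edge $\{x, w'\}$ of $G$ into the edge $\{w, w'\}$ of $G'$, an edge that is neither deleted nor contracted, so $c(w') \neq c(w)$. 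Thus $w$ is the unique neighbor of $x$ of its color. Since in both cases the witnessing color lies in $\{1,2,3,4,5\}$, $C$ is a valid partial CF coloring using five colors.

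The only external ingredient is the Four Color Theorem; the rest is the bookkeeping above, all of which follows directly from the listed properties of the partition $(V_0, V_1, V_2)$ and the construction of $G'$. I expect the point needing the most care to be the last one: verifying that the edges of $G$ incident to a vertex $x \in A$ genuinely survive in $G'$, so that properness of $c$ translates into the uniqueness required by the CF condition in $G$ — the design of $G'$ (independence of $A$, absence of $V_0$–$V_2$ edges, and the choice of the contraction targets $f(x)$) is exactly what makes this work.
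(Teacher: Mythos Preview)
Your proposal is correct and follows essentially the same approach as the paper: apply the Four Color Theorem to the planar minor $G'$ to color $V_1$, give $V_0$ a fresh fifth color, leave $V_2$ uncolored, and verify the CF condition using properties (1)--(5) of the partition exactly as you do. The only differences are cosmetic (the paper uses colors $\{2,3,4,5\}$ on $V_1$ and color $1$ on $V_0$), and your write-up is in fact a bit more explicit about why the relevant edges survive in $G'$.
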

\begin{proof}
Let $G$ be a planar graph. 
We first construct the graph $G'$ as above.
Since the steps for constructing $G'$ involve only edge deletion, 
and edge contraction, $G'$ is also a planar graph.
By the planar four-color theorem \cite{4ct}, every planar graph
has a proper four coloring. That is, 
there is an assignment $C: V(G') \rightarrow \{2,3,4,5\}$
such that no two adjacent vertices of $G'$ are assigned the same color.
Notice that we have colored all the vertices of $G'$, that is the entire set $V_1$.

Now, we extend $C$ to get a \cf 
coloring for $G$. 
For all vertices $v \in V_0$, we assign $C(v) = 1$.
The vertices in $V_2$ are assigned the color 0.

We will show that $C$ is indeed a partial \cf 
coloring of $G$. 
Consider a vertex $v\in A$ which is contracted to a neighbor $f(v)=w\in V_1$. 
The color assigned to $w$ is distinct from all $w$'s neighbors
in $G'$. 
Hence the color assigned to $w$ is the unique color 
among the neighbors of $v$ in $G$.

For each vertex $w \in V_1$, $w$ is a neighbor of exactly one vertex $v\in V_0$.
Every vertex $v \in V_0$ is colored 1, which is different 
from all the colors assigned to the neighbors of $w$ in $G'$.
\qed
\end{proof}

\noindent \textbf{Algorithmic Note:} The steps in 
the proof of Theorem \ref{graphon} lead to an algorithm.
The steps involved are construction of maximal distance-3 set, contraction 
of vertices in $A$ and the planar 4 coloring \cite{4ct}. All these steps
can be performed in $O(|V(G)|^2)$ time. Thus we have an $O(|V|^2)$ time algorithm,
that given a planar graph $G$, determines a partial \cf coloring for $G$ that uses
five colors.

Outerplanar graphs have a proper coloring using three colors. By 
argument analogous to Theorem \ref{graphon}, we infer the following.

\begin{corollary}\label{cor:outer}
Four colors are sufficient to partial \cf 
color an outerplanar graph.
\end{corollary}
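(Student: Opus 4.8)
The plan is to run the same argument as in the proof of Theorem \ref{graphon}, with the only change being that the planar four-color theorem is replaced by the fact that every outerplanar graph admits a proper coloring with three colors. So, starting from an outerplanar graph $G$, I would first build the auxiliary graph $G'$ exactly as described in the construction preceding Theorem \ref{graphon}: pick a maximal distance-3 set $V_0$, let $V_1$ be its open neighborhood, let $V_2 = V(G)\setminus(V_0\cup V_1)$, set $A = V_0\cup V_2$, delete all edges of $G[V_2]$ so that $A$ becomes independent, and then contract each $v\in A$ to some fixed neighbor $f(v)\in N(v)\cap V_1$, obtaining $G'$ on vertex set $V_1$.

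The key observation that makes the analogy go through is that $G'$ is again outerplanar. Indeed, $G'$ is obtained from $G$ purely by edge deletions and edge contractions, hence $G'$ is a minor of $G$, and the class of outerplanar graphs is minor-closed (it is precisely the class of graphs with no $K_4$ and no $K_{2,3}$ minor). Therefore $G'$ has a proper $3$-coloring, which I would take to be an assignment $C:V(G')\to\{2,3,4\}$; note this colors all of $V_1$. I then extend $C$ to $G$ by setting $C(v)=1$ for every $v\in V_0$ and $C(v)=0$ for every $v\in V_2$, using the extra color $1$ and the ``uncolored'' color $0$ just as in Theorem \ref{graphon}.

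For the verification I would repeat the two-case check from the proof of Theorem \ref{graphon}, which uses nothing about the number of colors. For a vertex $v\in A$ contracted to $w=f(v)\in V_1$: since $C$ is a proper coloring of $G'$, the color of $w$ differs from the colors of all its $G'$-neighbors, and as the contraction carries every edge at $v$ over to $w$, the color $C(w)$ appears exactly once among the neighbors of $v$ in $G$. For a vertex $w\in V_1$: by the partition properties $w$ has a unique neighbor $v\in V_0$, and $C(v)=1$, a color not used on $V_1$ and hence not on any other neighbor of $w$, so $1$ is conflict-free for $w$. Every vertex of $G$ is thus handled, and only the colors $\{0,1,2,3,4\}$ are used, i.e.\ four non-zero colors.

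The only genuine content beyond a direct transcription of the earlier proof is the claim that $G'$ stays outerplanar, so that is the step I would be careful to justify; everything else is routine and identical to the planar case. One should also confirm the minor-closure argument is legitimate here, namely that the edge-deletion step (removing $G[V_2]$) and the contraction step each individually preserve outerplanarity, which is immediate from the minor characterization. No other obstacle is expected.
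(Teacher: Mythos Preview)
Your proposal is correct and follows exactly the approach the paper intends: the paper simply remarks that outerplanar graphs are properly $3$-colorable and that the argument of Theorem~\ref{graphon} then applies verbatim. Your added justification that $G'$ remains outerplanar because outerplanarity is minor-closed (no $K_4$ or $K_{2,3}$ minor) is precisely the point that makes the analogy go through, and the rest of your verification matches the paper's reasoning.
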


The famous Hadwiger's conjecture states that if a graph $G$ does not 
contain $K_{k+1}$ as a minor, then $G$ is $k$-colorable (in the sense
of proper coloring). By an analogous argument again, we obtain the following:

\begin{corollary}\label{hadwiger}
Suppose the Hadwiger's conjecture is true and that G has no $K_{k+1}$ minor. Then 
$G$ admits a partial \cf 
coloring using $k+1$ colors.
\end{corollary}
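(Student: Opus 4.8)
The plan is to follow the proof of Theorem~\ref{graphon} almost verbatim, replacing the single invocation of the four-color theorem by an invocation of Hadwiger's conjecture. Starting from a graph $G$ with no $K_{k+1}$ minor, I would first build the auxiliary graph $G'$ exactly as in the construction preceding Theorem~\ref{graphon}: pick a maximal distance-3 set $V_0$, let $V_1$ be its neighborhood, let $V_2 = V \setminus (V_0 \cup V_1)$, set $A = V_0 \cup V_2$, delete the edges of $G[V_2]$, and contract every $v \in A$ onto a chosen neighbor $f(v) \in V_1$.

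The key observation is that $G'$ is produced from $G$ using only edge deletions and edge contractions, so $G'$ is a minor of $G$. Since the class of graphs with no $K_{k+1}$ minor is closed under taking minors, $G'$ also has no $K_{k+1}$ minor. Assuming Hadwiger's conjecture, $G'$ then admits a proper $k$-coloring; fix such a coloring $C : V(G') \rightarrow \{2,3,\ldots,k+1\}$. Exactly as in Theorem~\ref{graphon}, this assigns a color to every vertex of $V_1$, which is all of $V(G')$.

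I would then extend $C$ to $G$ by coloring every vertex of $V_0$ with $1$ and every vertex of $V_2$ with $0$, and verify that the result is a partial \cf coloring using the same two checks as in Theorem~\ref{graphon}. For $v \in A$ with $f(v) = w \in V_1$: the color $C(w)$ differs from the colors of all neighbors of $w$ in $G'$, and the contraction carries $N(v)$ in $G$ into $N(w)$ in $G'$, so $C(w)$ is uniquely realized among the neighbors of $v$. For $w \in V_1$: by property~2 of the partition, $w$ has exactly one neighbor in $V_0$, and that vertex is the only neighbor of $w$ with color $1$ (the $V_1$-neighbors of $w$ receive colors in $\{2,\ldots,k+1\}$ and the $V_2$-neighbors receive color $0$), so color $1$ is uniquely realized among the neighbors of $w$. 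Since the argument is a direct transcription, there is no genuine obstacle; the only substantive point to confirm is that $G'$ inherits the excluded-minor property, i.e.\ that being $K_{k+1}$-minor-free is preserved under the edge deletions and contractions used to form $G'$, and one should note that the small regimes ($k=1$ is vacuous under the standing assumption that $G$ is connected with at least two vertices, and $k \ge 2$ behaves as above).
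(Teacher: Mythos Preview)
Your proposal is correct and matches the paper's approach exactly: the paper gives no explicit proof for this corollary, merely stating that it follows ``by an analogous argument'' to Theorem~\ref{graphon}, and you have faithfully written out that analogous argument. The only point the paper leaves implicit and you make explicit is that $G'$, being obtained from $G$ by edge deletions and edge contractions, is a minor of $G$ and hence inherits the $K_{k+1}$-minor-free hypothesis needed to invoke Hadwiger's conjecture.
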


\section{Complete CF Coloring of Outerplanar Graphs}\label{sec:outer}

We saw in Corollary \ref{cor:outer} that outerplanar graphs can be partially \cf
colored using 4 colors. 
This implies a complete \cf coloring using 5 colors. In this section, we show 
an improved bound.

\begin{theorem}\label{thm:4col}
Four colors are sufficient for a complete 
\cf coloring of an outerplanar graph $G$.
\end{theorem}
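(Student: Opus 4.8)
The plan is to prove this by structural induction on an outerplanar graph $G$, exploiting the fact that outerplanar graphs are sparse: every outerplanar graph has a vertex of degree at most $2$, and more usefully, every $2$-connected outerplanar graph has a Hamiltonian cycle bounding the outer face with all chords non-crossing, so there is always an ``ear'' or a degree-$2$ vertex whose two neighbors are adjacent (an internal triangle) or a path of low-degree vertices. I would first reduce to the $2$-connected case: if $G$ has a cut vertex $x$, split $G$ at $x$ into blocks, color each block (or block-plus-pendant-structure) recursively, and argue that the colorings can be reconciled at $x$, since $x$ needs only one block to supply its conflict-free neighbor and the color at $x$ can be rotated to satisfy the neighbors in each block. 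Degree-$1$ vertices (pendants) are handled as a base/cleanup case: a pendant $u$ attached to $w$ forces $w$ to have the unique-in-$N(w)$ property automatically via $u$, provided $u$'s color is not repeated among $N(w)$ — which we can always arrange with four colors.

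For the $2$-connected case, I would take the outer Hamiltonian cycle $v_1 v_2 \cdots v_n$ and induct by removing a carefully chosen fan or ear. Concretely, pick an inner face (triangle, since we may assume $G$ is a maximal outerplanar graph — triangulating only makes coloring harder, so it suffices to color the triangulated version... actually, the reverse: we want to color $G$, and adding edges can only destroy the CF property, so instead I'd keep $G$ as given and peel off a degree-$2$ vertex $v$ whose neighbors $a,b$ lie on the outer cycle). Remove $v$, recursively CF-color $G - v$ with four colors, and then reinsert $v$: we must (i) choose $C(v)$ so that $v$ itself sees a uniquely colored neighbor among $\{a,b\}$ — easy, just make $C(a) \neq C(b)$ if possible, or if $C(a) = C(b)$ then $v$'s constraint needs another neighbor, handled below; and (ii) ensure that adding $v$ to $N(a)$ and $N(b)$ does not spoil the uniqueness witness that $a$ and $b$ already had. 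Point (ii) is the crux: inserting $v$ could turn $a$'s previously-unique color into a doubly-occurring color. I would handle this by allowing $C(v)$ to take the one color (out of four) that is ``safe'' for both $a$ and $b$, using a counting argument that with four colors such a safe choice exists — the obstructions from $a$, from $b$, and from $v$'s own requirement together rule out at most three colors in the worst case. The delicate part is when $a$ and $b$ each have exactly one witness color and these interact badly with the parity of how many neighbors of that color they have; here I would strengthen the induction hypothesis to track, for each vertex on the outer face, \emph{which} color is its witness and possibly \emph{how many} neighbors of non-witness colors it has, so that the reinsertion step has enough slack.

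The main obstacle, as usual for conflict-free coloring, is that the property is non-monotone and non-local: fixing vertex $v$'s requirement can break a neighbor's requirement, and the ``budget'' of four colors is tight (there is a cactus requiring three, and presumably an outerplanar graph requiring four, e.g.\ Figure~\ref{fig:k3}-like constructions), so there is essentially no room to waste. I expect the real work is in formulating the right induction invariant — likely something like ``$G$ has a CF coloring with four colors in which every outer-face vertex either is colored by its own unique witness or has a witness that can absorb one more neighbor'' — and then verifying the handful of reinsertion cases (degree-$2$ vertex with both neighbors adjacent vs.\ not, and the cut-vertex gluing) against that invariant. A secondary obstacle is the base case of small graphs and short cycles ($C_3$, $C_4$, $C_5$), which must be checked by hand to seed the induction, since e.g.\ $C_3$ and $C_5$ already exhibit the tightness.
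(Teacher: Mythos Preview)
Your high-level skeleton (block decomposition, then induct within each 2-connected block) matches the paper, but your engine for the 2-connected case --- peel off a single degree-$2$ vertex $v$, recurse on $G-v$, then reinsert --- has a real gap. In the open-neighborhood model, $v$'s own constraint depends only on the multiset $\{C(a),C(b)\}$, not on $C(v)$. So if the recursive coloring happens to give $C(a)=C(b)$ (and $a,b$ need not be adjacent in $G-v$, so nothing forces them apart), then $v$ has no uniquely colored neighbor \emph{no matter what color you assign to $v$}. Your line ``if $C(a)=C(b)$ then $v$'s constraint needs another neighbor, handled below'' is never actually handled, and your counting argument (``obstructions \dots\ rule out at most three colors'') misfires because $v$'s own requirement is not a constraint on $C(v)$ at all. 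You correctly sense that a strengthened invariant is needed, but with only one free color at reinsertion time there is not enough slack to force $C(a)\neq C(b)$ retroactively.

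The paper avoids this by working at the granularity of \emph{ears} rather than single vertices: it takes an ear decomposition $F_0,P_1,P_2,\dots$ of each block and colors one face at a time, so at each step it controls the colors of an entire path and can arrange them jointly. The invariant it carries is explicit and edge-based: for every edge $\{v,w\}$ one has $C(v)\neq C(w)$ (so the coloring is proper) and $|\{C(v),U(v),C(w),U(w)\}|=3$, where $U(\cdot)$ records the witness color. This is exactly the kind of strengthening you were groping toward, but pinned to edges rather than to outer-face vertices. The proof then becomes a (long) case analysis on how to color each new ear given the $(C,U)$ values at its two attachment points. There is also a wrinkle you did not anticipate: blocks whose inner faces are \emph{all} pentagons must be handled separately (they get a direct 3-coloring), because the edge invariant above cannot be launched from a 5-face.
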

Note that whenever we refer to an outerplanar
graph $G$, we will also be implicitly referring
to a planar drawing of $G$ with all the vertices appearing in the outer face. We will 
abuse language and say ``faces of $G$'' when 
we want to refer to faces of the above planar 
drawing. 

Theorem \ref{thm:4col} is proved using a two-level induction process. The first level
is using a \emph{block decomposition} of the graph. Any connected graph can be viewed
as a tree of its constituent blocks. We color the blocks in order so that when we
color a block, at most one of its vertices is previously colored. Each block is colored 
without affecting the color of the already colored vertex. The second level 
of the induction is required for coloring each of the blocks. We use
ear decomposition on each block and color the faces of the block in sequence.
However, the proof is quite technical and involves several cases of analysis at each 
step.

We summarize the relevant aspects of block decomposition
below.
The reader is referred to a standard textbook in graph theory \cite{Diestel} 
for more details on this.
\vspace{-0.05in}
\begin{itemize}
    \item A \emph{block} is a maximal connected subgraph without a cut vertex.
    \item Blocks of a connected graph are either maximal 2-connected subgraphs, or edges (the edges which form a block will be bridges).
    \item Two distinct blocks overlap in at most one vertex, which is a cut vertex.
    \item Any connected graph can be viewed as tree of its constituent blocks. 
\end{itemize}
In the following discussion, we explain how to construct a coloring 
$C: V(G) \rightarrow \{1, 2, 3, 4\}$ for an outerplanar graph $G$. At any 
intermediate stage, the coloring
$C$ will satisfy\footnote{The condition marked $\star$ is violated in a few cases. In the 
exceptional cases where it is violated, we shall explain how the cases are handled.} 
the following invariants:
\begin{framed}
\noindent \textbf{Invariants of $C$}
\begin{itemize}
    \item  Every vertex $v$ that has already been assigned a color $C(v)$ has a neighbor $w$, such that $C(w) \neq C(x)$, where
    $x \in N(v) \setminus \{w\}$. 
    For $v$, the function $U: V(G) \rightarrow \{1, 2, 3, 4\}$ denotes 
    the color of $w$, its uniquely colored neighbor.
    
    \item $\forall v\in V(G)$, $C(v)\neq U(v)$.
    \item $\forall \{v, w\} \in E(G)$, $C(v) \neq C(w)$
     and $|\{C(v),U(v),C(w),U(w)\}|=3$. 
     \textbf{($\star$)}
\end{itemize}
\end{framed}

Theorem \ref{thm:4col} is proved by using an induction
on the block decomposition of the graph $G$ and 
the below results. 

\begin{lemma}\label{lem:5cycles}
If $G$ is a 2-connected outerplanar graph such that all its inner faces contain exactly 5 vertices, then $G$ has a complete \cf 
coloring 
using 3 colors.
\end{lemma}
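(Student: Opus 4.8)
The plan is to induct on the number of inner faces of $G$ (equivalently, the number of pentagons), which matches the ``ear decomposition / color the faces in sequence'' approach announced above: the induction peels off one pentagon at a time. For the base case $G = C_5$ with vertices $u_1,\dots,u_5$ in cyclic order, I would simply exhibit the coloring $C(u_1)=C(u_2)=1$, $C(u_3)=C(u_4)=2$, $C(u_5)=3$ and check directly that every vertex sees two differently colored neighbors; in a cycle this is exactly the conflict-free condition, since each vertex has only two neighbors. Note in passing that this coloring is not proper, and indeed it cannot be: a proper $3$-coloring of $C_5$ is never conflict-free, so properness of $C$ is \emph{not} an invariant we can afford to maintain.

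For the inductive step I would invoke the standard structural facts (see Diestel~\cite{Diestel}) that the weak dual of a $2$-connected outerplanar graph is a tree and that two inner faces share at most one edge, which is a chord. A leaf pentagon $F$ of this tree therefore meets the rest of $G$ in a single chord $\{a,b\}$; its other three vertices $z,y,x$ lie consecutively on the outer cycle, forming a path $a$–$z$–$y$–$x$–$b$, and each of $z,y,x$ has degree $2$ in $G$. Set $G' = G - \{x,y,z\}$. I would verify that $G'$ is again $2$-connected — any path through $z,y,x$ can be rerouted through the surviving edge $\{a,b\}$, so $G'-v$ is connected for every vertex $v$ — outerplanar, and has all inner faces pentagonal (its inner faces are exactly those of $G$ other than $F$). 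The induction hypothesis then supplies a conflict-free $3$-coloring $C'$ of $G'$. Since $G'$ is $2$-connected, $a$ and $b$ have degree at least $2$ in $G'$, so there exist colors $\gamma_a,\gamma_b$ occurring exactly once in $N_{G'}(a)$ and $N_{G'}(b)$ respectively.

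I would extend $C'$ to $G$ by keeping $C'$ on $G'$, choosing $C(y)$ to be a color different from both $C'(a)$ and $C'(b)$ (any color different from $C'(a)$ in case $C'(a)=C'(b)$), and choosing $C(z) \in \{1,2,3\}\setminus\{\gamma_a\}$ and $C(x) \in \{1,2,3\}\setminus\{\gamma_b\}$ subject to $C(z)\neq C(x)$. The verification is then short: $z$ sees $a$ and $y$, $x$ sees $b$ and $y$, and $y$ sees $z$ and $x$, and in each case the two colors are distinct, so $z,y,x$ are all conflict-free; every vertex of $G'$ other than $a,b$ has an unchanged neighborhood; and at $a$ the only new neighbor $z$ satisfies $C(z)\neq\gamma_a$, so $\gamma_a$ still occurs exactly once in $N_G(a)$, and symmetrically for $b$.

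The main obstacle — essentially the only delicate point — is showing that the required choice of $C(z)$ and $C(x)$ always exists: we must avoid one forbidden color at each of $z$ and $x$ (to protect the witnesses of $a$ and $b$) while keeping $C(z)\neq C(x)$ (to give $y$ its witness), and one checks this is \emph{exactly} achievable with three colors, whether or not $\gamma_a=\gamma_b$. A secondary nuisance is the bookkeeping in the case $C'(a)=C'(b)$, which arises precisely because $C$ need not be proper, but it is handled uniformly by the choice of $C(y)$ above. Beyond that, the proof only needs the routine outerplanar structure facts about leaf faces of the weak dual and the $2$-connectivity of $G'$.
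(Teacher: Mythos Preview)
Your proof is correct and follows the same high-level strategy as the paper's---induct on the inner faces using the tree structure of the weak dual---but your execution is cleaner. The paper proceeds \emph{forward} via an ear decomposition $F_0, P_1, \dots, P_q$: it colours $F_0$ exactly as you do, and then, for each new pentagon attached along an edge $\{w_1,w_2\}$, it explicitly tracks the quadruple $(C(w_1),U(w_1),C(w_2),U(w_2))$ and handles by a four-case analysis all the combinations that can arise, giving an explicit colouring of $w_3,w_4,w_5$ in each case. You instead peel off a leaf pentagon and use only the \emph{existence} of the witnesses $\gamma_a,\gamma_b$ guaranteed by the induction hypothesis, which lets a short counting argument replace the case analysis entirely; in particular you never need the paper's auxiliary invariant $C(v)\neq U(v)$. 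For this lemma your route is shorter and more transparent; the paper's heavier bookkeeping is really there because the same $(C,U)$-invariant framework is reused in Theorem~\ref{thm:otherthan5cycle}, where one must extend across a precoloured cut vertex and the finer control of $U$ is genuinely needed.
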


\begin{theorem}\label{thm:otherthan5cycle}
Let $G$ be an outerplanar graph.
\begin{enumerate}
    \item If $B$ is a block of $G$ that is either a bridge, or contains an inner 
face $F$ with $|V(F)|\neq 5$, then $B$ has a complete \cf 
coloring
using at most 4 colors.

    \item If $B$ is a block of $G$, with exactly one vertex $v$ 
precolored with color $C(v)$ and unique color $U(v)$,
then the rest of $B$ has a complete \cf 
coloring
using at most 4 colors, while retaining $C(v)$ and 
$U(v)$. 
\end{enumerate}
\end{theorem}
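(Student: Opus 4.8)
The plan is to prove both parts by a single induction that exploits the way a $2$-connected outerplanar graph is assembled from ``ears''. If $B$ is a bridge the claim is immediate: color its two endpoints (or, in part~2, only the free endpoint) avoiding the one or two forbidden colors, read off the $U$-values, and observe that the sole failure of $(\star)$ is the standalone-bridge case already flagged by the footnote. So assume $B$ is $2$-connected. Fix a planar drawing; the weak dual of $B$ is a tree, so, rooting it at some inner face $F_1$ and taking any order in which every face comes after its parent, we obtain an enumeration $F_1,F_2,\dots,F_m$ of the inner faces such that each $B_i:=F_1\cup\cdots\cup F_i$ is again a $2$-connected outerplanar graph, $F_i$ (for $i>1$) meets $B_{i-1}$ in exactly one edge $\{w_0,w_{k+1}\}$, and the remainder of the boundary of $F_i$ is a path $w_0,w_1,\dots,w_k,w_{k+1}$ whose internal vertices $w_1,\dots,w_k$ (with $k=|V(F_i)|-2\ge 1$) are new. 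We induct on $i$, maintaining the three invariants with their documented exceptions.

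For the base case we conflict-free color the starting cycle $F_1$. A coloring of a cycle is proper and satisfies the open-neighborhood \cf condition exactly when it is a proper coloring of the square of the cycle; for $|V(F_1)|\ne 5$ that square is $4$-colorable, so such a coloring exists, and a suitable choice of the $U$-values then verifies the remaining invariants (triangles use $3$ colors, the $4$-cycle uses all $4$, and cycles of length $\ge 6$ present no difficulty). This is precisely why part~1 insists that $F_1$ have size $\ne 5$. In part~2 we instead choose $F_1$ to pass through the precolored vertex $v$ and color $F_1$ so as to retain $C(v)$ and $U(v)$: the neighbor realizing $U(v)$ lies outside $B$, so we only need the two neighbors of $v$ on $F_1$ to avoid the color $U(v)$ and to satisfy $(\star)$ on their edges to $v$; the slack supplied by $v$'s ``external'' unique neighbor is exactly what allows $F_1$ to have size $5$ here, so no size restriction is required in part~2.

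For the inductive step we extend the coloring of $B_{i-1}$ over the ear of $F_i$. We keep $C$ and $U$ unchanged on $B_{i-1}$ whenever possible: to preserve the uniqueness recorded by $U(w_0)$ and $U(w_{k+1})$ it suffices that $C(w_1)\ne U(w_0)$ and $C(w_k)\ne U(w_{k+1})$, and then no old edge needs rechecking since the shared edge $\{w_0,w_{k+1}\}$ is untouched. It remains to choose $C(w_1),\dots,C(w_k)$ so that the path is properly colored, each $w_j$ obtains a uniquely-colored neighbor (for $w_1$ among $\{w_0,w_2\}$, for $w_k$ among $\{w_{k-1},w_{k+1}\}$, for interior $w_j$ among $\{w_{j-1},w_{j+1}\}$), and $(\star)$ holds on the $k+1$ new edges. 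Because $(\star)$ on the old edge $\{w_0,w_{k+1}\}$ forces $|\{C(w_0),U(w_0),C(w_{k+1}),U(w_{k+1})\}|=3$, only a bounded list of ``boundary configurations'' can occur; for each we exhibit an explicit extension --- a periodic coloring along the path when $k$ is large, and a finite case analysis when $k\in\{1,2,3\}$.

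The hard part is $k=3$, i.e. attaching a $5$-face: with only four colors the $(\star)$ invariant becomes so tight that for certain boundary configurations no extension can meet all three invariants simultaneously. In those configurations we drop $(\star)$ on the offending edge(s) --- this is the exception announced in the footnote --- and we must then verify that the two surviving invariants still suffice for every subsequent ear addition and for reassembling the blocks in Theorem~\ref{thm:4col}. Consistently, the $2$-connected blocks excluded from part~1, namely those all of whose inner faces are $5$-cycles, are handled elsewhere: without a precolored vertex by Lemma~\ref{lem:5cycles} (only three colors, hence ample room), and with a precolored vertex by part~2, where $v$'s external unique neighbor again provides the missing freedom. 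A lesser, purely bookkeeping obstacle is confirming that the small-$k$ cases and the finite list of boundary configurations are genuinely exhaustive.
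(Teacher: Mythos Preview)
Your overall strategy---ear decomposition, color the starting face, then extend one ear at a time while maintaining the invariants---is exactly the paper's approach (Lemmas~\ref{lem:ear}, \ref{lem:facecol}, \ref{lem:onevertexcolored}, \ref{lem:path}). But you have misidentified where the difficulty in the inductive step lies. Attaching a $5$-face (your $k=3$) is \emph{not} the hard case: when the shared edge satisfies $(\star)$, the paper's Case~2(iii) of Lemma~\ref{lem:path} gives an explicit extension across a $5$-ear meeting all three invariants. The genuinely problematic ears are the \emph{short} ones, $k=1$ (triangle) and $k=2$ ($4$-face), in the boundary configuration $U(w_0)=U(w_{k+1})$. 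There an extension satisfying $(\star)$ sometimes does not exist, and simply ``dropping $(\star)$ on the offending edge'' is not enough: the next ear attached across that edge now sees a boundary configuration (e.g.\ $C(w_0)=C(w_{k+1})$) that your induction hypothesis never promised to handle.

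The paper closes this gap by two devices your plan does not anticipate. First, in several subcases (2(i)(c), 2(i)(d), 2(ii)(a)) it \emph{looks ahead}: rather than coloring only the current ear, it colors one or two adjacent faces at the same time, arranging that every $(\star)$-violating edge already lies in two inner faces and hence will never be the shared edge of a future ear. Second, for the residual boundary configuration $C(w_0)=C(w_{k+1})$, $U(w_0)\neq U(w_{k+1})$ that can still arise, it supplies a separate extension lemma (Lemma~\ref{lem:CuCvsame}), applicable only to faces of size $\ge 4$---which is why the look-ahead first checks whether the neighboring face is a triangle. Your sketch would need both ingredients to go through.
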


\begin{proof}[Proof of Theorem \ref{thm:4col}]
Let $G$ be an outerplanar graph. 
We apply block decomposition on $G$ which results in 
blocks that are either maximal 2-connected subgraphs or single edges. 

If $G$ is 2-connected and all its inner 
faces have exactly 5 vertices, then by Lemma 
\ref{lem:5cycles}, $G$ has a complete \cf coloring using 3 colors.

If $G$ does not fit the above description, 
then $G$ has a block $B$ such that either $B$ is
an edge, or $B$ has an inner face $F$ with 
$|V(F)|\neq 5$. In this case, by Theorem
\ref{thm:otherthan5cycle}.1, $B$ has a complete 
\cf coloring using at most 4 colors.

Viewing $G$ as a tree of its blocks, we can 
start coloring blocks that are adjacent to 
blocks that are already colored. Suppose the
block $B$ is already colored, and let $B'$ be a
block adjacent to $B$. Let $x$ be the cut-vertex between
the blocks $B$ and $B'$.
We use Theorem
\ref{thm:otherthan5cycle}.2 to obtain a complete 
\cf coloring of $B'$ using at most 4 colors, while retaining $C(x)$ and $U(x)$.
\qed
\end{proof}

We now proceed towards proving Lemma \ref{lem:5cycles} and Theorem \ref{thm:otherthan5cycle}.
Lemma \ref{lem:5cycles} and Theorem \ref{thm:otherthan5cycle} 
discusses the coloring of blocks, which is accomplished by means of induction on the faces of the blocks. Towards this end, 
we use the following fact about ear decomposition of 2-connected outerplanar graphs. 
For a proof of the below lemma, we refer the reader to 
\cite{Aubry2016} where this is stated as Observation 2.

\begin{lemma}[Ear Decomposition]\label{lem:ear}
Let $B$ be a 2-connected block in an outerplanar graph. Then $B$ has an ear decomposition $F_0, P_1, P_2, \ldots, P_q$ satisfying the following:
\begin{itemize}
    \item $F_0$ is an arbitrarily chosen inner face of $B$.
    \item Every $P_i$ is a path with end points $v, w$
    such that $\{v, w\}$ is an edge in $F_0 \cup \bigcup_{1 \leq j<i} P_j$.
    Thus $P_i$ together with the edge $\{v, w\}$ forms a face of $B$.
\end{itemize}
\end{lemma}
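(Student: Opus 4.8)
The plan is to prove the lemma by induction on the number of inner faces of $B$, relying on one structural fact and one reduction step. Structurally, since $B$ is $2$-connected and outerplanar its outer face is bounded by a cycle $C$ through all vertices of $B$, and every other edge is a chord of $C$; as the chords are pairwise non-crossing, each inner face is bounded by a cycle made up of arcs of $C$ separated by chords. I will use the folklore fact that the \emph{weak dual} $T$ of $B$ --- the graph whose vertices are the inner faces, two of them adjacent exactly when they share a chord --- is a tree. One clean justification: first add chords to make every inner face a triangle; the weak dual of a triangulated polygon is a tree by a routine ear-removal induction; and undoing each added chord merges two faces, which contracts an edge of this tree, so the property is preserved.

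For the base case, if $B$ has exactly one inner face then $B = C$ and we take $q = 0$ with $F_0 = C$. For the inductive step, suppose $B$ has at least two inner faces, so $T$ has at least two vertices and hence at least two leaves; choose a leaf $L$ of $T$ with $L \neq F_0$. Being a leaf, $L$ shares exactly one chord $e = \{v,w\}$ with the other inner faces, so the boundary of $L$ is $e$ together with an arc $P$ of $C$ from $v$ to $w$ whose internal vertices have degree $2$ in $B$ and lie on no other inner face. Deleting the internal vertices of $P$ (and their incident edges) yields a graph $B'$ whose outer boundary is $C$ with $P$ replaced by the single edge $e$; hence $B'$ is again $2$-connected and outerplanar, has one fewer inner face, and still has $F_0$ among its inner faces. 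By the induction hypothesis $B'$ has an ear decomposition $F_0, P_1, \ldots, P_{q-1}$, and I then set $P_q := P$. The edge $e$ lies in $E(B') = F_0 \cup \bigcup_{1 \leq j < q} P_j$, while $P_q$ together with $e$ bounds the face $L$ of $B$; therefore $F_0, P_1, \ldots, P_q$ is an ear decomposition of $B$, and its last ear has genuinely new internal vertices because they were removed in forming $B'$, the earlier ears inheriting this property from the induction.

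I expect the two points that need the most care to be, first, the assertion that the weak dual is a tree --- standard, but worth recording with at least the one-line argument above --- and, second, the verification that suppressing the ear of a leaf face leaves a $2$-connected outerplanar graph that still contains $F_0$. The crux of the second point is that a leaf of $T$ meets the rest of $B$ in exactly one chord, so its ear is an arc of $C$ all of whose internal vertices have degree $2$, and suppressing degree-$2$ vertices on the outer cycle can neither create a cut vertex nor destroy outerplanarity; since $L \neq F_0$, the face $F_0$ is left intact. If one prefers an explicit ordering to the induction, an alternative is to root $T$ at $F_0$, enumerate its faces in breadth-first order $F_0, F_1, \ldots, F_q$ (so that every face precedes its $T$-descendants), and take $P_i$ to be the boundary of $F_i$ with the chord it shares with its parent removed; the verification that this chord is already present and that $P_i$ contributes only new edges and interior vertices then reduces to the observation that the inner faces incident to any fixed vertex induce a connected subtree of $T$.
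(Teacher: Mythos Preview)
The paper does not actually supply its own proof of this lemma; it simply refers the reader to \cite{Aubry2016}, where the statement is recorded as Observation~2. Your proposal, by contrast, is a complete and correct self-contained argument: induction on the number of inner faces, using that the weak dual of a $2$-connected outerplanar graph is a tree, and peeling off a leaf face $L \neq F_0$ at each step. The two delicate points you flag are both handled adequately. The weak-dual-is-a-tree fact is standard and your triangulate-then-contract sketch is one acceptable justification. For the reduction step, the key assertion is that the internal vertices of the arc $P$ bounding $L$ have degree~$2$; this holds because any chord from such a vertex would either cross $e$ (impossible in an outerplane drawing) or lie inside the region bounded by $P \cup e$ and hence subdivide $L$, contradicting that $L$ is a face. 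With that in hand, suppressing $P$ clearly leaves a smaller $2$-connected outerplanar graph containing $F_0$, and appending $P$ as the final ear works as you describe. So your proof is sound and in fact supplies more than the paper itself does.
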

We first prove Lemma \ref{lem:5cycles}.
\begin{proof}(Proof of Lemma \ref{lem:5cycles})
Since $G$ is 2-connected, the entire graph forms a single block. 
Let $F_0, P_1,\dots, P_q$ be an ear decomposition of $G$. 
Recall that all the faces have exactly five vertices.
Let $F_0 = v_1-v_2-v_3-v_4-v_5-v_1$.
We assign\footnote{The coloring assigned in this proof
does not satisfy the condition marked $\star$. However,
this is not an issue since we are coloring the whole of
$G$ in this lemma.}
the following colors to the vertices in $F_0$: $C(v_1) = 1, C(v_2) = 1, C(v_3) = 2, C(v_4) = 2, C(v_5) = 3$. 
We also have $U(v_1) = 3, U(v_2) = 2, U(v_3) = 1, U(v_4) = 3, U(v_5) = 1$.

Let $P_i$ be any subsequent face $P_i = w_1-w_2-w_3-w_4-w_5-w_1$ with 
$\{w_1, w_2\}$ being the pre-existing edge in $F_0 \cup \bigcup_{1 \leq j<i} P_j$. Depending on the values already assigned to 
$C(w_1), U(w_1), C(w_2), U(w_2)$, we assign the colors to $w_3, w_4$ and $w_5$. 
We always ensure that $C(v) \neq U(v)$ for all vertices $v$. 
We note that the values $C(w_1), U(w_1), C(w_2), U(w_2)$ can take only 
the four below combinations, w.l.o.g. 
We explain the coloring for the rest of $P_i$ in each of these cases.
\begin{enumerate}
    \item $C(w_1) = C(w_2)$ and $|\{C(w_1), U(w_1), U(w_2)\}| = 3$. W.l.o.g., let $C(w_1) = 1, U(w_1) = 2, C(w_2) = 1, U(w_2) = 3$. Assign $C(w_3) = 2, C(w_4) = 2, C(w_5) = 3$ and $U(w_3) = 1, U(w_4) = 3, U(w_5) = 1$.
    \item $C(w_1) \neq C(w_2)$, $U(w_1) \neq U(w_2)$, and $|\{C(w_1), C(w_2), U(w_1), U(w_2)\}| = 3$. Either $w_1$ serves as the uniquely colored neighbor of $w_2$ or vice versa. W.l.o.g.,
    let $C(w_1) = 1, U(w_1) = 2, C(w_2) = 2, U(w_2) = 3$. Assign $C(w_3) = 1, C(w_4) = 3, C(w_5) = 3$ and $U(w_3) = 2, U(w_4) = 1, U(w_5) = 1$.
    \item $C(w_1) = U(w_2)$ and $C(w_2) = U(w_1)$.
    W.l.o.g.,
    let $C(w_1) = 1, U(w_1) = 2, C(w_2) = 2, U(w_2) = 1$.
    Assign $C(w_3) = 2, C(w_4) = 3, C(w_5) = 1$ and $U(w_3) = 3, U(w_4) = 2, U(w_5) = 3$.
    \item $C(w_1) = C(w_2)$ and $U(w_1) = U(w_2)$.
    W.l.o.g.,
    let $C(w_1) = C(w_2) = 1, U(w_1) = U(w_2) = 2$.
    Assign $C(w_3) = 1, C(w_4) = 2, C(w_5) = 3$ and $U(w_3) = 2, U(w_4) = 3, U(w_5) = 1$.
    \item The case $U(w_1) = U(w_2)$ and $|\{U(w_1), C(w_1), C(w_2)\}| = 3$ does not arise in the above colorings. 
\end{enumerate}
\qed
\end{proof}

At this point, to complete the proof of Theorem \ref{thm:4col}
we need to prove Theorem \ref{thm:otherthan5cycle}. 
We now state 
a few results that would help us towards this end.

\begin{lemma}\label{lem:facecol}
An uncolored face $F$, such that $|V(F)| \neq 5$, can be \cf colored using 4 colors satisfying the invariants.
\end{lemma}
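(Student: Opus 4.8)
The plan is to handle a single uncolored face $F = u_1 - u_2 - \cdots - u_m - u_1$ with $m \neq 5$ by a direct case analysis on $m$, assigning colors $C$ and designated unique-neighbor values $U$ to the vertices of $F$ so that all three invariants (including $(\star)$, or a controlled exception to it) hold internally. The key observation is that the invariants constrain only (a) each vertex's color differs from its $U$-value, (b) the chosen unique neighbor $w$ of $v$ is genuinely uniquely colored in $N(v)$, and (c) along each edge the four values $\{C(v),U(v),C(w),U(w)\}$ have size exactly $3$. Inside a single cycle these are local conditions on consecutive vertices, so I would first fix a periodic coloring pattern on a long cycle and then patch the ``seam'' depending on $m \bmod (\text{period})$.

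First I would dispose of the small cases $m = 3$ and $m = 4$ by hand, since a triangle and a $4$-cycle are too short for any periodic argument: for $m=3$ one needs all three colors distinct and can read off $U$ cyclically; for $m=4$ one checks that a $1,2,1,3$-type pattern with appropriate $U$-values works and satisfies $(\star)$ on every edge. Next, for $m \geq 6$, I would use the fact established in the proof of Lemma~\ref{lem:5cycles} that blocks of $5$-faces admit a clean $3$-coloring with a repeating block of colors $1,1,2,2,3$ and matching $U$-values $3,2,1,3,1$; the idea is to reuse an analogous periodic scheme here. Concretely, write $m = 5a + r$; if I can realize $F$ as a cyclic concatenation of a $1,1,2,2,3$ chunk repeated and one shorter chunk of length $r \in \{3,4,6,7\}$ (avoiding $r=5$ by absorbing), then $3$ colors suffice for most residues, and for the awkward residue I spend the fourth color on a single vertex to fix parity — this is precisely why the lemma is allowed $4$ colors rather than $3$. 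I would tabulate, for each residue class of $m$ modulo $5$, an explicit color/$U$ pattern on $u_1,\dots,u_m$ and verify the three invariants on each of the $m$ edges and at each of the $m$ vertices; this verification is routine but tedious.

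The main obstacle I anticipate is the invariant $(\star)$: requiring $|\{C(v),U(v),C(w),U(w)\}| = 3$ on \emph{every} edge, simultaneously with the uniqueness of each $U$-neighbor, is a genuinely rigid constraint, and a naive periodic coloring will typically fail it at the wrap-around edge $\{u_m, u_1\}$ or at the boundary between two chunks. I expect to need a small ``transition gadget'' of two or three vertices whose colors are chosen as a function of the values at both ends of the gap, exactly as cases (1)--(4) in the proof of Lemma~\ref{lem:5cycles} do for the pre-existing edge $\{w_1,w_2\}$. If $(\star)$ cannot be maintained for one particular $m$ (most plausibly $m=4$ or a specific residue), I would fall back on the footnote convention already set up in the paper — flag that edge as an exception to $(\star)$ and note that the first two invariants, which are all that the conflict-free property actually needs, still hold — deferring the accounting of how such exceptional edges interact with neighboring faces to the proof of Theorem~\ref{thm:otherthan5cycle}. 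A secondary, milder obstacle is bookkeeping: with $U$ being a \emph{choice} of unique neighbor rather than a derived quantity, I must make sure the chosen $U(u_i)$ is consistent with the colors eventually assigned to $u_i$'s two cycle-neighbors, so I would pick all $C$-values first for the whole face and only then read off a valid $U$.
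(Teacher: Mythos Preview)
Your approach has a genuine gap. The period-$5$ pattern $1,1,2,2,3$ you borrow from Lemma~\ref{lem:5cycles} violates invariant $(\star)$ not merely at a seam but on two out of every five edges, since $(\star)$ demands $C(v)\neq C(w)$ for \emph{every} edge. In Lemma~\ref{lem:5cycles} that is tolerable only because (as its footnote says) the entire graph is colored at once and no ear ever attaches afterwards. Here $F$ is the \emph{starting} face of an ear decomposition, and every subsequent ear is handled by Lemma~\ref{lem:path}, whose hypothesis is precisely that the two endpoint vertices satisfy the invariants; its case analysis covers only the two configurations $(\star)$ permits. Your fallback of flagging exceptional edges and ``deferring the accounting to Theorem~\ref{thm:otherthan5cycle}'' therefore cannot work: that deferred accounting \emph{is} Lemma~\ref{lem:path}. (Separately, your hand-done $m=4$ pattern $1,2,1,3$ already fails the first invariant: $v_2$'s two colored neighbors $v_1,v_3$ both receive color $1$, so $v_2$ has no uniquely colored neighbor at all.)

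The paper sidesteps all of this by using period $3$ rather than period $5$: set $C(v_1)=1,\ C(v_2)=2,\ C(v_3)=3$ and $C(v_i)=C(v_{i-3})$ thereafter, which is already a proper coloring of the cycle, so $C(v)\neq C(w)$ holds on every edge automatically. Then take $U(v_i)=C(v_{i+1})$ and fix only the wrap-around by a small reassignment near $v_k$ depending on $k\bmod 3$, spending the fourth color there. The case $k\equiv 2\pmod 3$ is exactly where $k=5$ would break (the patch overwrites $v_{k-3},\dots,v_k$ and collides with $v_1$'s neighborhood), which explains the hypothesis $|V(F)|\neq 5$.
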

\begin{proof}
Let $F=v_1-v_2-v_3-\dots-v_{k-1}-v_k-v_1$ be a  face with $|V(F)|=k$, $k \neq 5$. 
We assign $C(v_1)=1, C(v_2) = 2, C(v_3) = 3$ and for the remaining vertices (if any), we set $C(v_i)=C(v_{i-3})$. In order
to satisfy the invariants, we need to make the following changes:
\begin{itemize}
    \item $k \equiv 0 \pmod 3$. 
    No change is necessary. 
    \item $k \equiv 1 \pmod 3$. Reassign
     $C(v_k) = 4$.
    \item $k \equiv 2 \pmod 3$. 
     Reassign
     $C(v_{k-3}) = 4, C(v_{k-2}) = 2, C(v_{k-1}) = 3, C(v_k) = 4$.
    Notice that this coloring does not satisfy the invariants if $k = 5$. However, the smallest $k$ that we consider in this case is $k = 8$.
\end{itemize}

In each of the above cases the unique color for each vertex $v_i$ is provided by its cyclical successor i.e., $U(v_i) = C(v_{i+1})$.
\qed
\end{proof}

\begin{lemma}~\label{lem:onevertexcolored}
Let $F$ be a face (cycle) in $G$ with one vertex $v$ such that $C(v)$ and $U(v)$ are already assigned, with $C(v) \neq U(v)$.
Then the rest of $F$ can be complete \cf colored using at most 4 colors, while retaining $C(v)$ and $U(v)$, and 
satisfying the invariants.
\end{lemma}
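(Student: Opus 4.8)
The plan is to case-analyze on the length $k = |V(F)|$ of the cycle $F = v_1 - v_2 - \cdots - v_k - v_1$, where $v_1 = v$ is the precolored vertex, following the template of Lemma~\ref{lem:facecol} but with extra bookkeeping forced by $v_1$. After permuting color names we may assume $C(v_1) = 1$ and $U(v_1) = 2$. The two $F$-neighbors of $v_1$, namely $v_2$ and $v_k$, acquire no other colored neighbor during this step, so the only way they can disturb $v_1$ is through its own invariant: properness forbids color $1$ on them, and since we must \emph{retain} $U(v_1) = 2$ we need color $2$ to keep occurring exactly once in $N(v_1)$, which forbids color $2$ on them as well. Hence $C(v_2), C(v_k) \in \{3, 4\}$, and conversely any such choice leaves $v_1$ and its old uniquely colored neighbor undisturbed. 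Every other $v_i$, $2 \le i \le k$, has its two $F$-neighbors as its only colored neighbors, so its invariant holds precisely when those two neighbors get distinct colors; in particular we will need $C(v_3) \neq 1$ and $C(v_{k-1}) \neq 1$, which is also exactly what gives $v_2$ and $v_k$ their uniquely colored neighbors.

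First I would settle the small cases $k = 3, 4, 5$ directly. For $k = 3$ the coloring is forced to $\{C(v_2), C(v_3)\} = \{3, 4\}$; for $k = 4$ it is forced to $\{C(v_2), C(v_4)\} = \{3, 4\}$ with $C(v_3) = 2$; for $k = 5$ it is forced to $C(v_2) = C(v_5) \in \{3, 4\}$, and a pattern such as $C = (1,3,4,2,3)$ with $U = (2,1,3,4,2)$ works. In each case one assigns the $U$-values from the (distinct) neighbor-color pairs and checks the three invariants. For $k \ge 6$ I would run a $3$-periodic pattern starting from $v_2$, e.g.\ $C(v_2), C(v_3), C(v_4), C(v_5), \ldots = 3, 2, 1, 3, 2, 1, \ldots$, and then apply a constant-size repair near $v_{k-2}, v_{k-1}, v_k$ so that $C(v_k) \in \{3, 4\}$, $C(v_k) \neq C(v_{k-1})$ and $C(v_{k-1}) \neq 1$ --- the exact repair depending on $k \bmod 3$, in the same spirit as the three sub-cases of Lemma~\ref{lem:facecol}. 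Once $C$ is fixed, $U(v_i)$ is chosen from $\{C(v_{i-1}), C(v_{i+1})\}$ (with $U(v_1) = 2$ kept), and the starred condition $|\{C(v_i), U(v_i), C(v_{i+1}), U(v_{i+1})\}| = 3$ is verified edge by edge, re-picking $U$-values where needed.

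The step I expect to be the real obstacle is making \emph{all three} invariants --- in particular the starred one --- close up at the wrap-around, since both $v_2$ and $v_k$ are pinned inside the two-element set $\{3, 4\}$ and both touch $v_1$, which is pinned at $(C, U) = (1, 2)$; this removes most of the freedom that Lemma~\ref{lem:facecol} enjoyed, and it is why the argument breaks into cases on $k \bmod 3$ together with the base cases $k \in \{3, 4, 5\}$ (and possibly $k \in \{6, 7\}$). If some isolated configuration genuinely cannot satisfy the starred condition while keeping $C(v_1)$ and $U(v_1)$ fixed, it is recorded as one of the exceptional cases alluded to in the invariants box, with its resolution deferred to the proof of Theorem~\ref{thm:otherthan5cycle}; I would expect at most one or two such exceptions, if any.
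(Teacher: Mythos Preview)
Your approach is essentially the paper's: normalize to $C(v_1)=1$, $U(v_1)=2$, observe that $v_2$ and $v_k$ must avoid colors $1$ and $2$, run a $3$-periodic pattern from $v_2$, and repair near the wrap-around depending on $k \bmod 3$.

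The paper's execution is simpler than you anticipate, though. After the separate case $k=3$, it treats all $k\ge 4$ uniformly: set $C(v_2)=3$, $C(v_3)=2$, $C(v_i)=C(v_{i-3})$ for $i\ge 4$, then a single one- or two-vertex reassignment (to the color $4$) depending on $k\bmod 3$, and take $U(v_i)=C(v_{i+1})$ for every $i\neq 1$. With this rule the set $\{C(v_i),U(v_i),C(v_{i+1}),U(v_{i+1})\}$ collapses to $\{C(v_i),C(v_{i+1}),C(v_{i+2})\}$, and the $3$-periodic pattern (plus the local repair) makes any three consecutive colors distinct, so the starred invariant holds automatically on every edge, including the two edges at $v_1$. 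There is no need for separate base cases $k\in\{4,5,6,7\}$, no need to ``re-pick'' $U$-values edge by edge, and no exceptional configurations arise in this lemma --- the exceptions you are bracing for occur only later, in Lemma~\ref{lem:path}, where \emph{two} adjacent vertices are precolored.
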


\begin{proof}
Let $v_1$ be the colored vertex in 
the cycle $F$. 
We may assume w.l.o.g. that 
$C(v_1)=1$ and $U(v_1)=2$. Now, we extend $C$ to the remainder of $F$. 
\begin{itemize}
    \item $|V(F)|=3$ with $F=v_1-v_2-v_3-v_1$. 

    We assign: $C(v_2)=3$, $C(v_3)=4$ and $U(v_2)=1$, $U(v_3)=1$.
    
    
    
    \item $|V(F)|\geq 4$ with $F=v_1-v_2-v_3-\dots-v_{k-1}-v_k-v_1$. 
    
    We first assign: $C(v_2)=3$ and $C(v_{3})=2$. 
    For the remaining vertices $v_i$, we set $C(v_i) = C(v_{i-3})$ 
    for $4 \leq i \leq k$. However, we need to make some changes to this in order 
    to satisfy the invariants. We have the following subcases:
    \begin{itemize}
        \item $k \equiv 0 \mbox{ or } 1 \pmod 3$. 
        Reassign $C(v_k)=4$.
        \item $k \equiv 2 \pmod 3$. 
        Reassign $C(v_{k-1})=4$.
    \end{itemize}
    In each of the above cases the unique color for each vertex $v_i$ is provided by its cyclical successor i.e., $U(v_i) = C(v_{i+1})$. Observe that $U(v_1)$ is left unchanged, by ensuring $v_2$ and $v_k$, the neighbors of
    $v_1$, are not assigned the color $U(v_1)$.
\end{itemize}
\qed
\end{proof}

\begin{lemma}~\label{lem:path}
Let $P$ be a path in $G$ whose endpoints are $v_1, v_2$. Suppose
$\{v_1, v_2\} \in E(G)$ and that $v_1, v_2$ are already assigned 
the functions $C$ and $U$ satisfying the invariants.
Then the rest of $P$ can be \cf colored using at most 4 colors, while retaining $C$ and $U$ values of the endpoints, and satisfying the invariants.
\end{lemma}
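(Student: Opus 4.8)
\textbf{Proof proposal for Lemma~\ref{lem:path}.}

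The plan is to mimic the structure of Lemma~\ref{lem:onevertexcolored}, but now two adjacent vertices $v_1, v_2$ are precolored (with $C$ and $U$ values satisfying the invariants, so in particular $C(v_1) \neq C(v_2)$ and $|\{C(v_1),U(v_1),C(v_2),U(v_2)\}| = 3$), and we must extend the coloring along the interior of the path $P = v_1 - u_1 - u_2 - \cdots - u_m - v_2$. The base case $m = 0$ is vacuous: $P$ is just the edge $\{v_1,v_2\}$, already colored. For $m \geq 1$, I would fix the same kind of periodic pattern used before — assign colors to $u_1, u_2, u_3, \ldots$ cyclically from a three-element palette so that each interior vertex $u_i$ gets its unique neighbor from its successor along the path, i.e.\ $U(u_i) = C(u_{i+1})$ — and then patch the last two or three vertices near $v_2$ so that (a) $v_2$'s precolored $C(v_2), U(v_2)$ are respected, and in particular $u_m$ is not assigned the color $U(v_2)$ and $v_2$ still sees its old unique neighbor, and (b) the invariant $(\star)$ holds on the new edges $\{v_1,u_1\}$, $\{u_m,v_2\}$ and all interior edges. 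Because the two endpoints are on a common face (the face closed off by $P$ together with the edge $\{v_1,v_2\}$), the only genuinely new adjacencies created are along $P$ itself, so it suffices to verify the invariants edge-by-edge along $P$.

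The natural way to organize the argument is by cases on the configuration of $(C(v_1),U(v_1),C(v_2),U(v_2))$ — up to symmetry these are exactly the four combinations enumerated in the proof of Lemma~\ref{lem:5cycles} (the two colors agree with matching or mismatched unique colors; or the two colors differ with a shared third color playing the role of one unique color; etc.) — crossed with the residue of $m$ modulo $3$. For each such case I would write down an explicit periodic assignment for $u_1, \ldots, u_m$ together with a short (bounded-length) correction near $v_2$, exactly in the style of the $k \equiv 0,1,2 \pmod 3$ patches in Lemmas~\ref{lem:facecol} and~\ref{lem:onevertexcolored}. The key design constraint is that at the ``seam'' vertex $u_1$ next to $v_1$, the quadruple $\{C(v_1),U(v_1),C(u_1),U(u_1)\}$ has size exactly $3$; since $C(v_1),U(v_1)$ already occupy two colors and $U(u_1) = C(u_2)$, this forces $C(u_1)$ and $C(u_2)$ to be chosen from a constrained set, and symmetrically at the $v_2$ end. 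One checks that with a $3$-color cyclic pattern there is always enough freedom, using the fourth color only for the constant number of correction vertices, just as in the earlier lemmas.

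The main obstacle I anticipate is the endpoint compatibility at \emph{both} ends simultaneously: unlike Lemma~\ref{lem:onevertexcolored}, where only $v_1$ is fixed and we have total freedom at the far end, here the periodic pattern launched from $v_1$ must, after $m$ steps, arrive at $v_2$ already agreeing with the prescribed $C(v_2), U(v_2)$ — and $(\star)$ on $\{u_m, v_2\}$ couples $C(u_m), U(u_m) = C(v_2)$ with $C(v_2), U(v_2)$. For short paths ($m \leq 2$ or $3$) this may leave essentially no freedom, so those small cases have to be tabulated by hand and checked against every one of the four endpoint configurations; for longer paths the correction block near $v_2$ can be chosen (depending on $m \bmod 3$) to absorb any mismatch, so the bulk of the path is handled uniformly and only a bounded suffix needs per-case attention. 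I would also double-check that none of the corrections ever forces $C(u_i) = U(u_i)$ and that the edge $\{v_1, u_1\}$ does not accidentally create a fourth color in its $(\star)$-quadruple — these are the routine but error-prone verifications, entirely analogous to the footnoted exceptional cases already flagged in the paper.
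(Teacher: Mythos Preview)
Your high-level template (periodic three-color pattern with a bounded patch near $v_2$) is the right shape for the bulk of the proof, and it is essentially what the paper does in the long-path subcases. Two things, however, are off.

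First, a minor point: under the invariant $(\star)$ you always have $C(v_1)\neq C(v_2)$ and $|\{C(v_1),U(v_1),C(v_2),U(v_2)\}|=3$, so the four configurations from Lemma~\ref{lem:5cycles} collapse to just two: either $U(v_1)\neq U(v_2)$ (one of the $U$'s equals the other endpoint's $C$), or $U(v_1)=U(v_2)$. The paper's proof is organized around exactly these two cases.

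Second, and this is the real gap: your plan ``tabulate the short cases by hand'' fails in the second case. Take $U(v_1)=U(v_2)$ and $|V(P)|=3$, say $C(v_1)=1$, $C(v_2)=2$, $U(v_1)=U(v_2)=3$, with $P=v_2\!-\!v_3\!-\!v_1$. Any choice of $C(v_3)\in\{3,4\}$ and $U(v_3)\in\{1,2\}$ forces $|\{C(v_1),U(v_1),C(v_3),U(v_3)\}|=4$ or $|\{C(v_2),U(v_2),C(v_3),U(v_3)\}|=4$; there is \emph{no} assignment to $v_3$ that satisfies $(\star)$ on both new edges. The same obstruction appears for $|V(P)|=4$. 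So the lemma cannot be proved by coloring only the interior of $P$ while maintaining all invariants.

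The paper's way out is to step outside $P$: in these short subcases it either (a) deliberately violates $(\star)$ on an edge that is already in two faces (hence will never be the base edge of a later ear), or (b) looks ahead at the adjacent uncolored face(s) incident to $v_3$ (or to $\{v_3,v_4\}$), colors a few of their vertices as well, and sometimes creates an edge with $C$ equal on both endpoints --- a configuration not covered by $(\star)$ at all, handled by a separate auxiliary lemma (Lemma~\ref{lem:CuCvsame}). Your proposal does not anticipate that the proof must reach into neighboring faces and must tolerate controlled $(\star)$-violations; without that idea the short subcases of $U(v_1)=U(v_2)$ are simply unsolvable.
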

Since the proof of the above lemma is a bit long and involved, we first prove Theorem \ref{thm:otherthan5cycle} using Lemmas \ref{lem:facecol}, \ref{lem:onevertexcolored} and \ref{lem:path}.

\begin{proof}[Proof of Theorem \ref{thm:otherthan5cycle}]
\begin{enumerate}
    \item If the block is a bridge, say $\{v, w\}$, then we color it
    $C(v) = 1, C(w) = 2$ with $U(v) = 2, U(w) = 1$. Note that the 
    invariant marked $\star$ is violated in this case. However, this does
    not cause an issue since this edge is a bridge, and it does not appear 
    in any inner face.
    
    \vspace{0.05in}
    If the block is not a bridge, then by assumption, it contains a face $F$ 
    such that $|V(F)| \neq 5$. By Lemma \ref{lem:facecol}, we have a coloring
    of $F$ using 4 colors and satisfying the invariants. By the Lemma \ref{lem:ear} (Ear Decomposition),
    the block has an ear decomposition $F, P_1, P_2, \ldots$ with $F$ as the 
    starting inner face. Recall that for every path $P_i$, the end points form
    an edge in $F_0 \cup \bigcup_{1 \leq j<i} P_j$. 
    We color the paths $P_1, P_2, \ldots$ in this order. 
    By Lemma \ref{lem:path}, we have a coloring for each of these paths 
    using 4 colors and satisfying the invariants. 
    \item Let $v$ be the vertex in the block that is already colored. 
    W.l.o.g., we may assume that 
    $C(v) = 1$ and $U(v) = 2$.
    
    \vspace{0.05in}
    If the block is a bridge $\{v, w\}$,  we color $w$ with $C(w)=3$ and set $U(w) = 1$.
    
    \vspace{0.05in}
    If the block is not a bridge, choose an inner face $F$ that contains $v$.
    Using Lemma \ref{lem:onevertexcolored}, we color the remainder of $F$ using 
    at most 4 colors and satisfying the invariants. The rest of the proof 
    follows from the fact that we have an ear decomposition with $F$ as the 
    starting face, and Lemma \ref{lem:path}.
    This is very similar to the argument in the proof of part 1 of this theorem
    and hence the details are omitted.
\end{enumerate}
\vspace{-0.1in}
\qed
\end{proof}



\noindent In order to complete the proof of Theorem \ref{thm:4col}, the last remaining 
piece is the proof of Lemma \ref{lem:path}. 

\begin{proof}[Proof of Lemma~\ref{lem:path}] Let $v_1$ and $v_2$ be the end points of $P$. We extend the coloring $C$ to the remainder of $P$. 
According to the invariants of $C$, we have only 2 cases possible.

\noindent\textbf{Case 1:} $C(v_1) \neq C(v_2), U(v_1) \neq U(v_2)$. W.l.o.g. we may assume  
$C(v_1)=1$, $C(v_2)=2$ and $U(v_1)=2$, $U(v_2)=3$.
\begin{itemize}
    \item  $|V(P)|=3$, $P=v_2-v_3-v_1$. Assign $C(v_3)=4$ with $U(v_3)=2$.
    \item  $|V(P)|=4$, $P=v_2-v_3-v_4-v_1$. Assign $C(v_3)=4$, $C(v_4)=3$ with $U(v_3)=3$, $U(v_4)=1$. 

    \item  $|V(P)|\geq5$, $P=v_2-v_3-\dots -v_{k-1}-v_k-v_1$. 
    We first assign $C(v_3)=1, C(v_4) = 3, C(v_5) =4$. 
    For the remaining vertices $v_i$, we initially assign $C(v_i) = C(v_{i-3})$ 
    for $6 \leq i \leq k$. However, we need to make some changes 
    to satisfy the invariants. We have the following subcases:
    \begin{itemize}
        \item $k \equiv 0 \pmod 3$. 
        Reassign $C(v_{k-1})=2$ and $C(v_k) = 4$
        \item $k \equiv 1 \pmod 3$. 
        Reassign $C(v_{k-1})=2$.
        \item $k \equiv 2 \pmod 3$. 
        No change is necessary.
    \end{itemize}
    In each of the above cases the unique color for each vertex $v_i$ is provided by its cyclical successor i.e., $U(v_i) = C(v_{i+1})$.
\end{itemize}

\noindent \textbf{Case 2:} $U(v_1) = U(v_2)$. W.l.o.g., we may assume $C(v_1)=1$, $C(v_2)=2$ and $U(v_1)= U(v_2)=3$. 

\begin{itemize}
    \item  \textbf{Case 2(i):} $|V(P)|=3$ and $P=v_2-v_3-v_1$.
    \begin{itemize}
        \item \textbf{Case 2(i)(a):} Vertices $v_1$ and $v_2$ are the only
        neighbors of $v_3$. Assign $C(v_3)=4$ and $U(v_3)=2$. The  
        invariant marked $\star$ is not satisfied, but that does not matter as 
        $v_3$ does not participate in any further faces.
        
        \item \textbf{Case 2(i)(b):} One of the edges $\{v_1, v_3\}$ or $\{v_2, v_3\}$ does
        not feature in an another face. W.l.o.g., say $\{v_2, v_3\}$ be that edge. 
        Assign $C(v_3) = 4$ with $U(v_3) = 1$. The $\star$ invariant is violated
        for $\{v_2, v_3\}$ here but it does not affect the further coloring.
        
        \item \textbf{Case 2(i)(c):} One of the edges $\{v_1, v_3\}$ or $\{v_2, v_3\}$ 
         features in an uncolored face $F$ such that $|V(F)| \neq 3$. 
         W.l.o.g., say $\{v_2, v_3\}$ is that edge.

         We assign $C(v_3) = 4$ with $U(v_3) = 1$. Let $|V(F)| = k$ with 
         $F=v_3-w_1-w_2-\dots-w_{k-2}-v_2-v_3$.
         We assign $C(w_1)=3$, $C(w_2)=1$ and $C(w_3)=4$ (if $w_3$ exists). 
         For all $4\leq i\leq k-2$, $C(w_i)=C(w_{i-3})$. 
         If $k \equiv 0 \pmod 3$, we reassign $C(w_{k-4})=2$, $C(w_{k-3}) = 1$ and $C(w_{k-2})=4$.
         
         The unique colors $U$ for the vertices
         are assigned as follows:
         \begin{itemize}
             \item For $k=6$, $U(w_1) = 4, U(w_2) = 3, U(w_3) = 2$ and
             $U(w_4) = 2$.
             \item For $k\neq 6$, we have 
             for $1\leq i \leq k-3$, 
         $U(w_i) = C(w_{i+1})$ and $U(w_{k-2})  = C(v_2) = 2$. 
         \end{itemize}


    
    
    

         
\begin{figure}[t!]
\centering
\begin{tikzpicture}[every node/.style={node distance=1.8cm,scale=0.9}, scale = 0.8]
\tikzstyle{vertex}=[circle,draw, minimum size=8pt]
\tikzstyle{edge} = [draw,thick,-,black]
\node[vertex]  (v3) at (0,-2) {$v_3$};
\node[vertex]  (v1) at (-1.2,2) {$v_1$};
\node[vertex]  (v2) at (1.2,2) {$v_2$};
\node[vertex]  (x) at (-3,0) {$x$};
\node[vertex]  (y) at (3,0) {$y$};
\node[vertex]  (z) at (-2.2,-2.5) {$z$};

\draw[edge, color=black] (x) -- (v1) -- (v2) -- (y);
\draw[edge, color=black] (x) -- (v3) -- (v1);
\draw[edge, color=black] (y) -- (v3) -- (v2);
\draw[edge, color=black] (x) -- (z) -- (v3);
\end{tikzpicture}
\caption{Case 2(i)(d)}\label{fig:2id}
\end{figure}
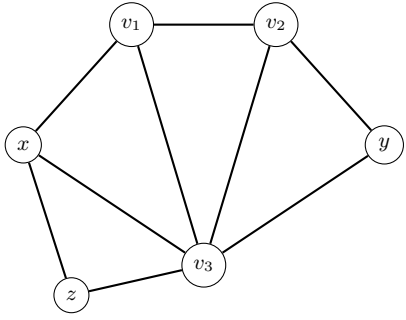

        \item \textbf{Case 2(i)(d):} The only remaining case is when both
         the edges $\{v_1, v_3\}$ or $\{v_2, v_3\}$ feature in uncolored triangular
         faces. Let $\{v_1, v_3\}$ form a triangular face with $x$ and 
         $\{v_2, v_3\}$ with $y$. We have two subcases:
        \begin{itemize}
            \item The edge $\{x, v_3\}$ forms a triangular face with another 
            vertex $z$ (see Figure \ref{fig:2id}).  
            Assign $C(v_3) = 1, C(x) = 2, C(y) = 4, C(z) = 3$ and $U(v_3) = 4, U(x) = 3, 
            U(y) = 2, U(z) = 1$. Some edges 
            violate the invariant marked $\star$, but these edges are already
            part of two faces, and hence do not feature in the further coloring.
            
            \item The edge $\{x, v_3\}$ is not part of a triangular face with another
            vertex. In this case, we assign $C(v_3) = 4, C(x) = 4, C(y) = 1$ and $U(v_3) = 2, U(x) = 1, U(y) = 2$. Out of the edges that violate 
            the invariant marked $\star$, the only one that can participate in the
            further coloring is the edge $\{x, v_3\}$. By assumption, $\{x, v_3\}$
            is not part of a triangular face. In Lemma \ref{lem:CuCvsame}, 
            we explain how to color the uncolored face that is $\{x, v_3\}$ may be 
            a part of.
        \end{itemize}
    \end{itemize}
    
    \item \textbf{Case 2(ii):} $|V(P)|=4$, $P=v_2-v_3-v_4-v_1$. 
    \begin{itemize}
        \item \textbf{Case 2(ii)(a):} The edge $\{v_3, v_4\}$ forms a triangular
        face with a vertex $x$.
        We assign $C(v_3)= 1, C(v_4) = 4, C(x) = 3$, with $U(v_3)= 3, U(v_4) = 1, U(x) = 4$.
        \item \textbf{Case 2(ii)(b):} The edge $\{v_3, v_4\}$ is not part of 
        an uncolored triangular face. We assign $C(v_3) = C(v_4) = 4$, with 
        $U(v_3)= 2, U(v_4) = 1$. If the edge  $\{v_3, v_4\}$ is part of an
        uncolored face $F$, by assumption, we know that $|V(F)| \geq 4$ and hence
        we can use Lemma \ref{lem:CuCvsame} to color $F$ satisfying the invariants.
    \end{itemize}
    \item \textbf{Case 2(iii):} $|V(P)|= 5$ with $P = v_2-v_3-v_4-v_5-v_1$. We
    assign $C(v_3)=1, C(v_4) = 3, C(v_5) =2$, with $U(v_3)=3, U(v_4) = 2, U(v_5) =1$.
    \item \textbf{Case 2(iv):} $|V(P)|\geq 6$, with $P=v_2-v_3-\dots-v_{k-2}-v_{k-1}-v_k-v_1$.

    We first assign $C(v_3)=4$ and $C(v_4)=3$. 
    For $5\leq i \leq k$, assign $C(v_i)=C(v_{i-3})$. 
    If $k \equiv 1 \pmod 3$, then reassign $C(v_{k-2})=1$ and $C(v_k)=2$. 
    For each vertex $v_i$, the unique color is provided by its cyclical successor
    i.e., $U(v_i) = C(v_{i+1})$.
\end{itemize} 
\qed
\end{proof}

\begin{lemma}\label{lem:CuCvsame}
Let $F$ be a face with $|V(F)|\geq 4$ with such that the edge $\{v_1, v_2\} \in E(F)$ 
and $v_1$ and $v_2$ already colored such that 
$C(v_1)=C(v_2)$ and $U(v_1)\neq U(v_2)$.
Then the rest of $F$ can be \cf colored using 4 colors satisfying the invariants. 
\end{lemma}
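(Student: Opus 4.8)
\medskip

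The plan is to extend the coloring along the path $P = v_1 - w_1 - w_2 - \dots - w_{k-2} - v_2 - v_1$ that $F$ consists of (removing the precolored edge $\{v_1,v_2\}$), mimicking the modular pattern used in Lemmas~\ref{lem:facecol}, \ref{lem:onevertexcolored} and in Case 2(i)(c) of Lemma~\ref{lem:path}. By the hypotheses and the invariants we may assume w.l.o.g.\ that $C(v_1)=C(v_2)=1$, and that $U(v_1)\neq U(v_2)$; up to relabeling the remaining colors, the two essentially different situations are $\{U(v_1),U(v_2)\}=\{2,3\}$ and $\{U(v_1),U(v_2)\}$ containing the color $1$ (i.e.\ one of $v_1,v_2$ is the unique neighbor of the other — but that would force an edge colored the same at both ends, contradicting the first invariant, so in fact $\{U(v_1),U(v_2)\}=\{2,3\}$ is the only case, and I would say so explicitly). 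So assume $U(v_1)=2$, $U(v_2)=3$.

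\medskip

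First I would handle the small cases $|V(F)| \in \{4,5,6\}$ by hand, giving an explicit table of $C$ and $U$ values for $w_1,\dots,w_{k-2}$, just as the earlier lemmas do; the only subtlety is making sure the neighbors $w_1$ and $w_{k-2}$ of $v_1$ and $v_2$ respectively do not destroy $U(v_1)$ or $U(v_2)$ — concretely, $w_1 \neq$ (color $U(v_1)=2$ if $v_1$'s unique neighbor is still to be $w_1$'s predecessor side, etc.), which is arranged by choosing $C(w_1)\in\{3,4\}$ and $C(w_{k-2})\in\{2,4\}$ appropriately. For $|V(F)| = k \geq 7$, I would set $C(w_1)=3$, $C(w_2)=4$, $C(w_3)=1$ (so that $U(v_1)=2$ is untouched since $w_1\neq 2$, and $w_1$'s unique color comes from $w_2$), then propagate $C(w_i)=C(w_{i-3})$ for $4\le i\le k-2$, and finally patch the last two or three vertices depending on $k \bmod 3$ so that (a) $C(w_{k-2})\neq 3 = U(v_2)$, (b) $C(w_{k-2})\neq C(v_2)=1$, and (c) every $w_i$ gets its unique color $U(w_i)=C(w_{i+1})$ from its cyclical successor around $P$, with $U(w_{k-2})=C(v_2)=1$. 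The three residue classes $k\equiv 0,1,2 \pmod 3$ each need a slightly different endpoint patch (reassigning $C(w_{k-4}), C(w_{k-3}), C(w_{k-2})$ to suitable values in $\{2,3,4\}$), exactly parallel to the bookkeeping in Lemma~\ref{lem:path}, Case 2(i)(c).

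\medskip

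After writing down the assignment I would verify the three invariants: properness along $P$ follows since consecutive colors in the pattern $3,4,1,3,4,1,\dots$ differ and the patches preserve this; the condition $C(v)\neq U(v)$ is immediate from $U(w_i)=C(w_{i+1})$ and $C(w_i)\neq C(w_{i+1})$; and the $\star$ condition $|\{C(v),U(v),C(w),U(w)\}|=3$ for each edge $\{v,w\}$ of $F$ needs to be checked edge by edge, including the two boundary edges $\{v_1,w_1\}$ and $\{v_2,w_{k-2}\}$ — here one uses that $U(v_1)=2$, $U(v_2)=3$ are distinct from the colors $3,1$ of the first few $w$'s and from $C(w_{k-2})\in\{2,4\}$. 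The pre-existing edge $\{v_1,v_2\}$ is a chord of $F$; since $C(v_1)=C(v_2)$ it \emph{violates} $\star$, but as in the earlier cases (2(i), 2(ii)) this edge has already played its role and does not sit on any further uncolored face, so the violation is harmless — I would remark on this explicitly.

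\medskip

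The main obstacle I anticipate is purely combinatorial: getting the endpoint patches right simultaneously for all residues of $k \bmod 3$ while protecting the four precolored quantities $C(v_1),U(v_1),C(v_2),U(v_2)$ at \emph{both} ends of the path at once. Because $v_1$ and $v_2$ share the same color but have different unique colors, the pattern has to "turn the corner" cleanly on both sides, and for the shortest lengths ($k=4,5$) there is very little room, so those few cases may require an ad hoc coloring rather than the generic modular one — that is why I would dispatch them separately first.
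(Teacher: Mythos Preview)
Your approach is essentially the same as the paper's: fix the precolors w.l.o.g., dispatch the shortest faces by explicit assignment, and for longer faces lay down a period-$3$ pattern along the uncolored path with endpoint patches depending on $k \bmod 3$, taking $U(w_i)=C(w_{i+1})$ from the cyclical successor. The paper's version differs only cosmetically (it normalizes to $C(v_1)=C(v_2)=4$, $U(v_1)=1$, $U(v_2)=2$, starts the generic pattern already at $|V(F)|\ge 6$, and writes out the explicit reassignments for each residue class rather than leaving them as conditions to be met).
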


\begin{proof}
W.l.o.g., we may assume $C(v_1)=C(v_2)=4$, $U(v_1)=1$ and $U(v_2)=2$. We have the following
cases:
\begin{itemize}
    \item $|V(F)|=4$ with $F=v_2-v_3-v_4-v_1-v_2$. 
    We assign: $C(v_3)=1$, $C(v_4)=3$ and $U(v_3)=4$ and $U(v_4)=4$.
     \item If $|V(F)|= 5$ with $F=v_2-v_3-v_4-v_5-v_1-v_2$.
     We assign: $C(v_3)=1$, $C(v_4)=2$, $C(v_5)=3$ and $U(v_3)=2$, $U(v_4)=3$ and $U(v_4)=4$.
    \item If $|V(F)|\geq 6$ with $F=v_2-v_3-\dots- v_{k-1} - v_{k}-v_1 -v_2$. 
    We assign: $C(v_3)=3$ and $C(v_4)=2$. For all $5\leq i \leq k$, $C(v_i)=C(v_{i-3})$.
    \begin{itemize}
        \item $k \equiv 0 \pmod 3$. Reassign $C(v_{k-1})=1$. 
        \item $k \equiv 1 \pmod 3$. No change is required.
         \item $k \equiv 2 \pmod 3$. Reassign $C(v_{k-1})=1$ and $C(v_k)=2$.
    \end{itemize}
    The unique color of each vertex $v_i$ is provided its cyclical successor i.e., $U(v_i) = C(v_{i+1})$.
\end{itemize}
\qed
\end{proof}

\noindent\textbf{Algorithmic Note:} The steps in the proof of Theorem \ref{thm:4col}
leads to an algorithm. Block decomposition, outerplanarity testing and embedding
outerplanar graphs \cite{outerplanar} can all be done in linear time, i.e., $O(|V(G)|)$. Thus we have 
an $O(|V(G)|)$ time algorithm, that given an outerplanar graph $G$, determines a complete
\cf coloring for $G$ that uses four colors.

\subsection{Cactus Graphs}

Now, we show that a cactus graph can be complete 
\cf colored using 3 colors. This is a tight bound.

\begin{definition}\label{def:cactus}
A cactus graph is a connected graph in which any two cycles have at most one vertex in common.
\end{definition}

\begin{theorem}
Three colors are sufficient and sometimes necessary to complete 
\cf 
color cactus graphs.
\end{theorem}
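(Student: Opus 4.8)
The plan is to prove the two halves separately: sufficiency (every cactus graph has a complete CF coloring with $3$ colors) and necessity (some cactus graph needs $3$ colors).

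For \textbf{necessity}, I would just exhibit a small cactus that cannot be complete CF colored with $2$ colors. With only two colors, a vertex $v$ has a uniquely colored neighbor iff it has exactly one neighbor of one color (so in particular every vertex must have a neighbor, which connected graphs on $\geq 2$ vertices satisfy, but more is needed). A convenient candidate is a short cycle, or a ``friendship''-type cactus: take a triangle, or take several triangles sharing a common vertex, or a path; I would look for the smallest example where a two-coloring forces some vertex to see either zero or two neighbors of each color. For instance, one can check that $C_3$ already fails (in any $2$-coloring of $K_3$, some color class has size $\geq 2$ and some vertex sees both remaining vertices in the majority class), so a triangle itself is a cactus requiring $3$ colors; if one wants a bridgeless-free or larger example the path $P_4$ or a bowtie also works. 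I would state one such example and verify it in one or two lines.

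For \textbf{sufficiency}, I would mirror the structure of the proof of Theorem~\ref{thm:4col}: use the block decomposition of the cactus. In a cactus every block is either a bridge (single edge) or a cycle. So I reduce to two tasks, paralleling Theorem~\ref{thm:otherthan5cycle}: (1) color a single block (bridge or cycle) from scratch with $3$ colors satisfying suitable invariants, and (2) recolor the rest of a block with $3$ colors given that exactly one vertex $v$ of it is precolored with $C(v), U(v)$, retaining those values. Then, viewing the cactus as a tree of blocks, color blocks one at a time: the first block via (1), and each subsequent block via (2), using the cut vertex shared with an already-colored block as the precolored vertex. I would carry along invariants analogous to the $C$/$U$ invariants in the excerpt, but \emph{without} the $\star$ condition (which was only needed to keep $4$-colorings consistent across shared faces; in a cactus two cycles share at most a vertex, never an edge, so there is no cross-face interaction to worry about). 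This is exactly why $3$ colors suffice here but $4$ were needed for general outerplanar graphs.

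The main obstacle is the cycle-coloring step, i.e., the cactus analogue of Lemma~\ref{lem:onevertexcolored} done with only $3$ colors. Coloring a cycle $C_k$ from scratch with $3$ colors so that every vertex has a uniquely colored neighbor is doable (e.g.\ a pattern close to $1,2,3,1,2,3,\dots$ with a small local fix at the seam, using $U(v_i)=C(v_{i+1})$), and the residue of $k$ modulo $3$ produces the usual case split. The genuinely delicate part is the one-precolored-vertex case: given $C(v_1)=1$, $U(v_1)=2$ fixed, I must extend around the cycle $v_1-v_2-\cdots-v_k-v_1$ using colors $\{1,2,3\}$ so that $v_2$ and $v_k$ avoid color $2$ (to preserve $U(v_1)$), every new vertex gets a uniquely colored neighbor, and $C\neq U$ everywhere — and I expect that, as in the excerpt, a handful of cases (triangle, $4$-cycle, and $k\equiv 0,1,2\pmod 3$ for larger cycles) each need an explicitly written assignment. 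I would write these out in the same enumerated style as Lemma~\ref{lem:onevertexcolored}. Bridges are trivial in both steps. Finally I would note that, as before, block decomposition of a cactus is computable in linear time, giving an $O(|V(G)|)$ algorithm.
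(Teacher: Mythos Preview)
Your proposal is correct and follows essentially the same approach as the paper: block decomposition into bridges and cycles, the invariant $C(v)\neq U(v)$ (dropping the $\star$ condition for exactly the reason you state), a $1,2,3$-periodic pattern around each cycle with a small seam fix depending on $k\bmod 3$, and the one-precolored-vertex extension handled by the same periodic idea starting from $C(v_2)=3$, $C(v_3)=2$. Your necessity example $C_3$ is even simpler than the paper's (which cites the subdivided-$K_3$-with-pendants of Figure~\ref{fig:k3}); both work.
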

\begin{proof}
Cactus graphs are outerplanar and by  definition~\ref{def:cactus}, 
any two cycles have at most one vertex in common. 
We apply the block decomposition on the cactus graph $G$. 
Note that each block is a cycle or a bridge. 
Throughout the coloring, we maintain the invariant that for each vertex $v$, 
the unique color seen by $v$,  $U(v) \neq C(v)$.

Let $B_0$ be the first block considered. 
If $B_0$ is a bridge $\{v, w\}$,  we assign $C(v)=1$, $C(w)=2$ and $U(v) = 2$, $U(w)=1$. 
Else $B_0$ is a cycle with  $F=v_1-v_2-\dots-v_{k-1}-v_{k}-v_1$. 
Initially we assign $C(v_1)=1$, $C(v_2)=2$ and $C(v_3)=3$. 
For all $4\leq i \leq k$, $C(v_i)=C(v_{i-3})$. If $k \equiv 2 \pmod 3$, $C(v_{k-1})=3$.
In each of the cases, for each vertex $v$, we can identify $U(v)$ such that $U(v) \neq C(v)$.

Now, we choose a block that is adjacent to an already colored block. 
Such a block has exactly one colored vertex. Let $v_1$ 
be the that vertex with $C(v_1) = 1$ and $U(v_1) =2$.
If the block is a bridge, say $\{v_1, w\}$, then assign $C(w) = 3$ with $U(w) = 1$.
Else the block is an inner face $F=v_1-v_2-\dots-v_{k-1}-v_{k}-v_1$ 
with a colored vertex $v_1$. 
We initially assign $C(v_2)=3$, $C(v_3)=2$. For all $4\leq i \leq k$, $C(v_i)=C(v_{i-3})$. 
In the case when $k \equiv 0 \pmod 3$, we reassign $C(v_k) = 3$, and in the case when 
$k \equiv 2 \pmod 3$, we reassign $C(v_{k-1}) = 2$. In this case too, we maintain the
invariant that $C(v) \neq U(v)$ for each $v$.

To see that the bound is tight, observe that Figure~\ref{fig:k3} is a cactus graph that requires three colors.
\qed
\end{proof}

\section{Kneser graphs}\label{sec:kneser}

In this section, we study the 
\cf coloring of Kneser graphs. 
Throughout this section, we use $[n]$ to denote the set $\{1, 2, .., n\}$. 
The Kneser graph $K(n,k)$ is formally defined as follows: 
\begin{definition}[Kneser graph]
The Kneser graph $K(n,k)$ is the graph whose vertices are $\binom{[n]}{k}$, the $k$-sized subsets of $[n]$, and the vertices $x$ and $y$ are adjacent if and only if $x\cap y = \emptyset$ (when $x$ and $y$ are viewed as sets).
\end{definition}

During the discussion, we shall use the words \emph{$k$-set} or \emph{$k$-subset} to refer 
to a set of size $k$. We shall sometimes refer to the $k$-subsets of $[n]$ and the vertices
of $K(n,k)$ in an interchangeable manner.

\begin{lemma}\label{lem:cfonk+1}
 $k+2$ colors are sufficient to complete \cf 
 color a $K(n,k)$ when $n\geq 3k-1$. 
\end{lemma}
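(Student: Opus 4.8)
\textbf{Proof proposal for Lemma~\ref{lem:cfonk+1}.}

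The plan is to exhibit an explicit coloring of the $k$-subsets of $[n]$ using the color set $\{1,2,\ldots,k+2\}$ and then verify the conflict-free condition on open neighborhoods. Recall that two vertices of $K(n,k)$ are adjacent exactly when, as sets, they are disjoint. The natural idea is to let the color of a $k$-set $x$ be governed by how $x$ interacts with a fixed small ``reference'' portion of the ground set, say the elements $1,2,\ldots,k+1$. First I would color a vertex $x$ by: if $x \cap \{1,\ldots,k+1\}$ is a single element $i$, give $x$ the color $i$; if $x$ is disjoint from $\{1,\ldots,k+1\}$, give $x$ color $k+2$; and for all remaining vertices (those meeting $\{1,\ldots,k+1\}$ in two or more elements) give them some fixed color, say $1$ (the exact choice here needs care and is where a case split may be needed). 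The hypothesis $n \geq 3k-1$ should be exactly what is needed to guarantee that the various ``witness'' sets I want to point to actually exist as genuine $k$-subsets disjoint from prescribed small sets.

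The key steps, in order, would be: (1) fix the reference set $R = \{1,\ldots,k+1\}$ and define the coloring as above; (2) for a vertex $x$ with color class determined by $x \cap R$, identify a candidate uniquely-colored neighbor --- for instance, if $x \cap R = \{i\}$, then a neighbor of $x$ is any $k$-set disjoint from $x$; I would try to find one whose intersection with $R$ is a single chosen element $j \neq i$, making its color $j$, and argue that no other neighbor of $x$ has color $j$ by a counting/disjointness argument; (3) handle the vertex disjoint-from-$R$ case and the ``large intersection'' case separately, again producing an explicit color that appears exactly once in the neighborhood; (4) check that $n \geq 3k-1$ suffices: a neighbor of $x$ avoiding $x$ (size $k$) and hitting $R$ in exactly one prescribed spot needs $k-1$ further elements outside $x \cup R$, and since $|x \cup R| \leq 2k+1$, we need $n - (2k+1) \geq k-1$, i.e. $n \geq 3k$, or $n \geq 3k-1$ with a slightly sharper accounting when $x$ and $R$ overlap.

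The main obstacle I expect is step (2)/(3): ensuring the chosen witness color appears \emph{exactly once} among the neighbors of $x$, not just at least once. Merely exhibiting one neighbor of a given color is easy; ruling out a second neighbor of that same color is the delicate part, because the neighborhood of $x$ in $K(n,k)$ is large. The trick should be to pick the witness color to be one that is structurally rare in the neighborhood --- e.g. arrange that the only $k$-sets disjoint from $x$ that receive that color are forced to contain a specific $(k-1)$-subset, of which there is just one. Concretely, if $x$ avoids two reference elements $j_1, j_2$, then among neighbors of $x$ I would look for the unique one equal to $\{j_1\} \cup T$ for a canonically-chosen $(k-1)$-set $T$ disjoint from $x \cup R$, and show any other same-colored neighbor would have to coincide with it. Getting this uniqueness argument airtight across all the color classes, while juggling the boundary cases where $x$ overlaps $R$ heavily, is the technical heart of the proof; the parameter bound $n \geq 3k-1$ will drop out of the feasibility requirements of these constructions.
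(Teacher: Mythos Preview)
Your coloring scheme has a genuine gap at exactly the point you flagged as delicate: uniqueness of the witness color. With $R=\{1,\ldots,k+1\}$ and color $j\in\{2,\ldots,k+1\}$ assigned to every $k$-set meeting $R$ in the single element $j$, consider any vertex $x$ with $x\cap R=\{i\}$. A neighbor of $x$ with color $j$ (for $j\neq i$) is precisely a set of the form $\{j\}\cup T$ with $T$ a $(k-1)$-subset of $[n]\setminus(R\cup x)$. Since $|R\cup x|=2k$, there are $\binom{n-2k}{k-1}$ such neighbors; this equals $1$ only at the single value $n=3k-1$ and is strictly larger for every $n>3k-1$. The same overcount hits the other cases ($x\cap R=\emptyset$, $|x\cap R|\ge 2$). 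Your proposed fix --- forcing the witness to contain a ``canonically chosen'' $(k-1)$-set $T$ --- does not help, because the \emph{color} of a neighbor depends only on its intersection with $R$, not on which $T$ it uses; any other $T'$ disjoint from $x\cup R$ gives a second neighbor of the same color. So as stated, the construction fails for all $n>3k-1$.

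The paper's coloring avoids this by a nested rather than flat structure: it puts color $i$ on the $k$-subsets of $\{1,\ldots,k+i-1\}$ whose maximum element is $k+i-1$ (for $1\le i\le k$), uses color $k+1$ on the single set $\{2k,\ldots,3k-1\}$, and dumps everything else into color $k+2$. The point is that the entire color class $C_{t+1}$ lives inside $[k+t]$; hence, for any $v$, a neighbor of $v$ in $C_{t+1}$ must be a $k$-subset of $[k+t]$ disjoint from $v$, and if $t$ is chosen minimal with $|[k+t]\setminus v|=k$, there is exactly one such subset. Uniqueness is thus structural and does not degrade as $n$ grows. The condition $n\ge 3k-1$ is used only to make the lone color-$(k+1)$ vertex exist. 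To repair your approach you would need to redesign the coloring so that some color class is confined to a set of size at most $|v|+k$ for the relevant $v$; the paper's ``maximum-element'' layering is the cleanest way to achieve that.
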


\begin{proof}
We first assign 
a coloring\footnote{In this coloring, 
the uniquely colored neighbor is not colored $k+2$ for any 
of the vertices. Thus, by recoloring the color class $k+2$
with the color 0, we get a partial coloring that uses $k+1$ colors.}
to the vertices of 
$K(n,k)$ and then argue 
that this coloring is a complete \cf 
coloring. 
\begin{itemize}
    \item For any vertex ($k$-set) $v$ that is a subset of $\{1, 2, \dots, 2k-1\}$, we assign $C(v) = \max_{\ell \in v} \ell - (k-1)$.
    \item The set $\{2k, 2k+1, \dots, 3k-1\}$ is assigned the color $k+1$.
    \item All the remaining vertices are assigned the color $k+2$.
\end{itemize}
For example, for the Kneser graph $K(n,3)$, 
we assign the color 1 to the vertex $\{1, 2, 3\}$, 
color 2 to the vertices $\{1, 2, 4\}, \{1, 3, 4\}, \{2, 3, 4\}$, 
color 3 to the vertices $\{1, 2, 5\}, \{1, 3, 5\}, \{1, 4, 5\}, \{2, 3, 5\}, \{2, 4, 5\}, \{3, 4, 5\}$,
color 4 to the vertex $\{6, 7, 8\}$ and 
color 5 to all the remaining vertices.

Now, we prove that in the above coloring, every vertex
has a uniquely colored neighbor.
Let $C_i$ be the set of all vertices assigned the color $i$, i.e., 
the color class of the color $i$.
Notice that $C_1 \cup C_2 \cup \dots \cup C_k = \binom{[2k-1]}{k}$.
Let $w_{k+1}$ denote the $k$-set $\{2k, 2k+1, \dots, 3k-1\}$.
Any vertex $v \in C_1 \cup C_2 \cup \dots \cup C_k$ is a neighbor of $w_{k+1}$.
Since $w_{k+1}$ is the lone vertex colored $k+1$, it serves as the uniquely colored
neighbor for any $v \in C_1 \cup C_2 \cup \dots \cup C_k$.

Now we have to show the presence of uniquely colored neighbors for 
vertices that have some elements from outside $\{1, 2, \dots, 2k-1\}$. 
Let $v$ be the vertex such that it has some elements from outside 
$\{1, 2, \dots, 2k - 1\}$. That is, $v \cap \{1, 2, \dots, 2k - 1\} \neq v$. Let $t = t(v)$
be the smallest nonnegative integer such that $\left | \{1, 2, \dots, k + t\} \setminus v \right | = k$. Since $v$ has at least one element 
from outside $\{1, 2, \dots, 2k-1\}$, $t$ is at most $k-1$.

We claim that $v$ has a lone neighbor colored $t+1$, and this neighbor is given 
by the set $\{1, 2, \dots, k + t\} \setminus v$. By the choice of $t$, 
this is the only neighbor of $v$ that is colored $t+1$. It can be further
observed that there are no neighbors of $v$ that are assigned a  
color smaller than $t+1$.
\qed
\end{proof}

Now we show that $k+2$ colors are necessary, when $n$ is large enough.
\begin{theorem}\label{lowerbound}
$k+2$ colors are necessary to complete \cf 
color 
$K(n,k)$ when $n\geq k(k+1)^2 + 1$.
\end{theorem}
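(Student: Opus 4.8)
The goal is a lower bound: for $n \geq k(k+1)^2+1$, no conflict-free coloring of $K(n,k)$ uses only $k+1$ colors. Since Lemma~\ref{lem:cfonk+1} already gives $k+2$ as an upper bound (and $n \geq k(k+1)^2+1 \geq 3k-1$ for all $k\geq 1$), combining the two yields $\chi_{CF}(K(n,k)) = k+2$. So I would argue by contradiction: suppose $C : \binom{[n]}{k} \to \{1,\dots,k+1\}$ is a valid complete CF coloring, and derive a contradiction from the largeness of $n$.

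\textbf{Approach.} The key structural feature of $K(n,k)$ is that the neighborhood of a vertex $v$ is \emph{all} $k$-subsets of $[n]\setminus v$, i.e. a copy of $K(n-k,k)$ on the ground set $[n]\setminus v$. For $v$ to have a uniquely colored neighbor, some color $i$ must appear exactly once among the $k$-subsets of $[n]\setminus v$. The plan is to show that if $n$ is large, then for \emph{every} color $i$, the color class $C_i = C^{-1}(i)$ is so rich that it is impossible for all $k+1$ classes to be ``thin'' in the required sense simultaneously. Concretely, I would first show: if a color class $C_i$ is nonempty, say it contains a $k$-set $A$, then outside any sufficiently large ``avoiding'' set, $C_i$ still contains many sets — the obstruction being that a lone-neighbor-of-color-$i$ for $v$ requires $C_i \cap \binom{[n]\setminus v}{k}$ to be a singleton. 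The counting idea: a color class that contains two disjoint $k$-sets $A, B$ cannot be the ``witness color'' for any vertex $v$ disjoint from $A \cup B$; more generally I would track, for each color $i$, a small set $S_i$ such that $C_i$ restricted to $[n]\setminus S_i$ has at most one element (this $S_i$ must ``hit'' all but one set of $C_i$), and bound $|S_i|$. The crux is a pigeonhole / sunflower-type argument: either $|S_i|$ is small (say $\leq$ some $f(k)$), in which case $C_i$ is essentially confined near $S_i$; or $C_i$ is itself a single set. Since there are only $k+1$ colors, $\bigcup_i S_i$ together with the finitely many ``lone'' sets has bounded size $\approx (k+1)\cdot f(k)$, and once $n$ exceeds roughly $k(k+1)^2$, there is a vertex $v$ disjoint from all of this structure; then $v$ sees, in color $i$, either zero or at least two sets of $C_i$ for every $i$ — contradicting conflict-freeness. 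I expect the bound $k(k+1)^2+1$ to come out of choosing $f(k) = k(k+1)$ or similar and padding.

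\textbf{Key steps in order.} (1) Assume a $(k+1)$-coloring $C$; for each color $i$ define $D_i$ to be a minimal set of vertices whose removal from the ground set leaves at most one set of $C_i$ surviving — formally, $D_i \subseteq [n]$ minimal such that at most one member of $C_i$ is disjoint from $D_i$. (2) Show $|D_i|$ is bounded by a function of $k$ only: the argument is that if $|D_i|$ were large, one could greedily extract many pairwise ``crossing'' sets in $C_i$, and then find a vertex $v$ disjoint from two of them — I would make this precise using a greedy covering argument (each element of $D_i$ must be ``needed'', i.e. lie in some set of $C_i$ not otherwise covered). (3) Let $R = \bigcup_{i=1}^{k+1} D_i \cup \{\text{the } \leq k+1 \text{ surviving sets}\}$; bound $|R| \leq (k+1) f(k) + (k+1)k$. (4) If $n > |R| + k$, pick a $k$-set $v \subseteq [n] \setminus R$. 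For each $i$: every set of $C_i$ disjoint from $v$ must meet $D_i$ (since $v$ avoids the survivor); but I need that $C_i$ has either $0$ or $\geq 2$ such sets — this requires a slightly more careful definition of $D_i$ (it should kill all but the survivor, and then $v$ disjoint from $D_i$'s complement structure sees no set or re-derive a second one). (5) Conclude $v$ has no uniquely colored neighbor, contradiction; hence $\geq k+2$ colors, matching the upper bound.

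\textbf{Main obstacle.} The delicate point is step (2) — pinning down the correct quantitative bound on $|D_i|$, i.e. showing a color class that ``looks thin from infinitely many vantage points'' must be genuinely small. The naive statement ``$C_i$ is thin $\Rightarrow$ $C_i$ is covered by few elements'' can fail if $C_i$ is, e.g., all $k$-sets through a fixed small set; so the definition of $D_i$ and the exact sense in which $v$ ``sees $\geq 2$ sets of color $i$'' must be engineered to play together. I would likely need a sunflower-lemma or a direct greedy argument showing: if no $(k-1)$-or-smaller set covers all-but-one of $C_i$, then $C_i$ contains a $2$-element ``matching mod a core'' that any far-away $v$ doubly-sees. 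Getting the constant to land at exactly $k(k+1)^2$ — rather than something weaker — is where the real bookkeeping lives; I would first prove it with \emph{some} explicit $n_0(k)$ and then optimize.
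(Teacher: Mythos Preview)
Your overall strategy---find a vertex $x$ that sees either $0$ or $\geq 2$ sets of every color---matches the paper's, but the proposed construction has two gaps that are not merely bookkeeping.

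\textbf{Step (2) is false as stated.} Take $C_i$ to be a family of $\lfloor n/k\rfloor$ pairwise disjoint $k$-sets. Any $D_i\subseteq [n]$ with the property ``at most one member of $C_i$ is disjoint from $D_i$'' must contain an element from each of the other $\lfloor n/k\rfloor-1$ sets, so $|D_i|\geq \lfloor n/k\rfloor-1$, which grows with $n$. There is no uniform bound $f(k)$ on $|D_i|$, so the union $R$ in step (3) cannot be controlled this way. (The example you flag in your ``main obstacle'' paragraph---all $k$-sets through a fixed element---is actually harmless, since then $D_i$ has size $1$; the real obstruction is a large matching.)

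\textbf{Step (4) goes in the wrong direction.} You pick $v$ \emph{disjoint} from each $D_i$ and from each survivor $s_i$. But then $v$ \emph{does} see $s_i$ (since $v\cap s_i=\emptyset$), and nothing prevents $v$ from seeing \emph{only} $s_i$ in color $i$: the other members of $C_i$ meet $D_i$, but $v$ avoids $D_i$, which tells you nothing about whether those members meet $v$. To eliminate a lone witness $s_i$ you need $x$ to \emph{contain} an element of $s_i$, not avoid it.

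\textbf{The paper's fix.} The paper builds $x$ adaptively in the opposite direction: whenever some $C_j$ currently has exactly one member disjoint from the partial $x$, add an element of that member to $x$ (so $d_j(x)$ drops to $0$, and stays $0$ as $x$ grows). The quantitative point replacing your step (2) is a single pigeonhole: some color class $C_{i_0}$ has size at least $q=\binom{n}{k}/(k+1)$, and any partial $x$ with at most $k+1$ entries meets at most $(k+1)\binom{n-1}{k-1}$ vertices; for $n\geq k(k+1)^2+1$ one checks this is $<q-2$, so $C_{i_0}$ always retains at least two members disjoint from $x$ and never becomes effectively singleton. Hence the hitting phase consumes at most $k$ entries of $x$. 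The remaining entries are then chosen to avoid two fixed witnesses $y_j,y_j'$ from each $C_j$ that was not hit (at most $k+1$ such $j$), which requires only $n>k+2k(k+1)$. So the correct dichotomy is not per-color (small $D_i$ versus large $D_i$) but global: at least one class is provably too large to ever need hitting, and that is what caps the hitting budget at $k$ and makes the constant $k(k+1)^2$ fall out.
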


\begin{proof}
We prove this by contradiction. Suppose that $K(n,k)$ can be colored using the $k+1$
colors $1, 2, 3, \dots, k+1$. 
For each $1 \leq i \leq k+1$, let $C_i$ denote the 
color class corresponding to the color $i$, i.e., the set of all vertices colored with 
the color $i$. Let $q = \binom{n}{k}/(k+1)$. 

If for all $i$, $|C_i| < q$, this implies that the total number of vertices is strictly
less than $q (k+1) = \binom{n}{k}$. This is a contradiction. Hence there is at least one 
$i$, such that $|C_i| \geq q$.
For any vertex $v$, let $d_i(v)$ denote the number of neighbors of $v$ in $C_i$. 

Our strategy is to find one vertex, say $x$, 
which does not have a uniquely colored neighbor. More formally, 
we want $x$ to satisfy $d_i(x) \neq 1$, for all $1 \leq i \leq k+1$.
We construct the vertex ($k$-set) $x$, by choosing elements in it as follows. Suppose there
are $C_i$'s that are singleton, i.e., $|C_i| = 1$. For all the singleton $C_i$'s we choose
a hitting set. In other words, we choose entries in $x$ so as to ensure that $x$ intersects
with the vertices in all the singleton $C_i$'s. This partially constructed $x$ may also 
 intersect with vertices in other $C_i$'s. Some of the other $C_i$'s might become 
 ``effectively singleton'', that is $x$ may intersect with all the vertices in those $C_i$'s 
 except one. We now choose further entries in $x$ so as to hit these effectively singleton
 $C_i$'s too. Finally, we terminate this process when all the   
 remaining $C_i$'s are not singleton. 
 
 At this stage, $x$ can have potentially $k+1$ entries, one each 
 to hit the $k+1$ color classes. However, the below claim shows that
 not all the color classes need to be hit. 
 

\noindent \textbf{Claim:} There exists an $i$ for which $C_i$ does not become singleton/effectively singleton.

\noindent \emph {Proof of claim.} We have already seen that there is at least one $C_i$
for which $|C_i| \geq q = \frac{\binom{n}{k}}{k+1}$. We show that this $C_i$ does not
become effectively singleton. 

Let $t$ be the number of entries in $x$ when the 
above process terminates. Notice that each entry in $x$ can cause $x$ to intersect
with at most $\binom{n-1}{k-1}$ other vertices. We have $t\leq k+1$ entries in $x$, so $x$ can intersect with at most 
$(k+1) \binom{n-1}{k-1}$ vertices. 
When $n \geq k(k+1)^2 + 1$, it can be
verified that $(k+1) \binom{n-1}{k-1} < q - 2$, leaving at least two vertices in $C_i$ that
do not intersect with $x$.
\qed

Due to the above claim, the number of entries in $x$ is $t \leq k$.
To fill up the remaining entries of $x$ (if any), we consider the set(s) $C_j$ that have not become effectively singleton. For each of these sets $C_j$, we choose
two distinct vertices, say $y_j, y_j' \in C_j$. We choose the remaining entries of $x$ so that $x \cap y_j = \emptyset$ and $x \cap y_j' = \emptyset$. The number of 
such sets $C_j$ is at most $k+1$. So for choosing the remaining entries of $x$, we have
at least $n -t - 2k(k+1)$ choices. 
Because $n > k^3$, we can choose such entries.
\qed
\end{proof}

It is worth noting that the above proof technique 
cannot be applied for showing a lower bound of $k+3$. For such a proof, we would start with a
$k+2$ coloring, and try for a contradiction. In this case, we could have $k+1$ singletons and effective singletons, which could require
$k+1$ elements of $[n]$ to hit. However, $x$
can hold at most $k$ elements. This is where the
proof breaks down.

\subsection{\cf Closed Neighborhood 
Coloring of Kneser Graphs}
In this section, we see some results on the 
\cf closed neighborhood 
coloring 
of Kneser graphs.
We abbreviate 
\cf closed neighborhood 
coloring as \cfcn coloring and denote by $\chi_{CF-CN}(G)$, 
the conflict-free closed neighborhood chromatic number of $G$.
It is easy to see that a proper 
coloring of a graph $G$ is also a 
\cfcn 
coloring. That is, $\chi_{CF-CN}(G) \leq \chi(G)$ for all 
graphs $G$.
Since $\chi(K(n,k)) \leq n - 2k + 2$ \cite{lovasz}, we have that
$\chi_{CF-CN}(K(n,k)) \leq n - 2k + 2$.

\begin{lemma}\label{cfcnk+1} 
When $n\geq 2k+1$, we have $\chi_{CF-CN}(K(n,k)) \leq k+1$.
\end{lemma}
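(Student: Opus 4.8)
When $n \geq 2k+1$, we have $\chi_{CF-CN}(K(n,k)) \leq k+1$.

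The plan is to write down an explicit coloring with $k+1$ colors and verify the conflict-free closed-neighborhood condition by hand, taking advantage of the fact that in the closed-neighborhood setting every vertex lies in its own neighborhood and may therefore serve as its own uniquely colored vertex. The coloring I would use is a pared-down version of the one from Lemma~\ref{lem:cfonk+1}: for a $k$-set $v \subseteq [2k-1]$ put $C(v) = \max_{\ell \in v}\ell - (k-1)$ (this lies in $\{1,\dots,k\}$ since $\max(v) \in \{k,\dots,2k-1\}$), and for every $k$-set $v$ containing at least one element $\geq 2k$ put $C(v) = k+1$. This manifestly uses at most $k+1$ colors.

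I would then verify the property in two cases. First, if $v \subseteq [2k-1]$ with $C(v) = c \leq k$, note that any vertex $u$ with $C(u) = c$ must also satisfy $u \subseteq [2k-1]$ and $\max(u) = \max(v)$; since $\max(v) \in v$, no such $u$ can be disjoint from $v$, so no neighbor of $v$ carries the color $c$, and $v$ is the unique vertex of color $c$ in $N[v]$. Second, if $v$ contains an element $\geq 2k$, then $s := |v \cap [2k-1]| \leq k-1$; listing $[2k-1]\setminus v$ as $a_1 < \cdots < a_m$ with $m = 2k-1-s \geq k$, the set $u := \{a_1,\dots,a_k\}$ is a $k$-set disjoint from $v$, hence a neighbor of $v$, with $C(u) = a_k-(k-1) \leq k$. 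The crucial observation is that $u$ is the only $k$-subset of $[2k-1]\setminus v$ with maximum $a_k$ (the elements of $[2k-1]\setminus v$ below $a_k$ are exactly $a_1,\dots,a_{k-1}$), and any vertex of color $a_k-(k-1)$ in $N[v]$ must be such a subset — and it cannot be $v$ itself, whose color is $k+1$. Hence $u$ is the unique vertex of its color in $N[v]$.

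The step that requires the most care — and the closest thing to an obstacle — is the counting in the second case: arguing that the color $a_k-(k-1)$ forces the neighbor to be exactly $\{a_1,\dots,a_k\}$, and that the color classes $1,\dots,k$ are realized only by subsets of $[2k-1]$, so that the color-$(k+1)$ vertices (including $v$ itself) cannot interfere with uniqueness. Everything else is immediate. I would also remark that this argument in fact uses no lower bound on $n$ at all — when $n \leq 2k-1$ only the first case arises — so the hypothesis $n \geq 2k+1$ is merely the natural range in which $K(n,k)$ is of interest.
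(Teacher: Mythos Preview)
Your proof is correct and essentially identical to the paper's: you use the same coloring $C(v)=\max_{\ell\in v}\ell-(k-1)$ for $v\subseteq[2k-1]$ and $C(v)=k+1$ otherwise, argue that each $v$ colored $\leq k$ is its own uniquely colored closed neighbor (the paper phrases this via independence of $\binom{[2k-1]}{k}$, you via the shared element $\max(v)$), and for $v$ colored $k+1$ exhibit the unique neighbor $\{a_1,\dots,a_k\}$, which is exactly the paper's $\{1,\dots,k+t\}\setminus v$ in different notation. Your closing remark that the hypothesis $n\geq 2k+1$ is not actually used is a valid additional observation.
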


\begin{proof}
We assign the following coloring to the vertices of $K(n,k)$: 


\begin{itemize}
    \item For any vertex ($k$-set) $v$ that is a subset of $\{1, 2, \dots, 2k-1\}$, we assign $C(v) = \max_{\ell \in v} \ell - (k-1)$.
    \item All the uncolored vertices are assigned color $k+1$.
\end{itemize}


For $1 \leq i \leq k+1$, let $C_i$ 
be the color class of the color $i$. 
Notice that $C_1 \cup C_2 \cup \dots \cup C_k = \binom{[2k-1]}{k}$. Since any two $k$-subsets of $\{1, 2, \dots, 2k-1\}$ intersect,
it follows that 
$\binom{[2k-1]}{k}$ is an independent set. Hence each of the color classes $C_1, C_2, \dots, C_k$ are independent sets.
So if $v$ is colored with color $i$, where $1 \leq i \leq k$, it has no neighbors of its own color. Hence, it serves as its own uniquely colored neighbor.

If $v$ is colored $k+1$, then $v \not \subset [2k-1]$. That is, $v$ has some elements from outside $[2k-1] = \{1, 2, \dots, 2k - 1\}$. 
Let $t = t(v)$
be the smallest nonnegative integer such that $\left | \{1, 2, \dots, k + t\} \setminus v \right | = k$. Since $v$ has at least one element 
from outside $\{1, 2, \dots, 2k-1\}$, $t$ is at most $k-1$.
It is easy to verify that the vertex corresponding to the set $\{1, 2, \dots, k + t\} \setminus v$ is the lone 
neighbor of $v$ that is colored $t+1$, and thus serves as the uniquely colored neighbor of $v$.
\qed
\end{proof}

\begin{lemma}\label{2k+1lemma}
 $\chi_{CF-CN}(K(2k+1,k)) = 2$, for all $k\geq 1$.
\end{lemma}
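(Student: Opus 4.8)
The claim is that $\chi_{CF-CN}(K(2k+1,k)) = 2$ for all $k \geq 1$. Since $K(2k+1,k)$ is the odd graph (each vertex has a unique non-neighbor, namely its set-complement), it is connected and has at least two vertices, so $\chi_{CF-CN} \geq 2$ trivially. The plan is therefore to exhibit an explicit $2$-coloring that is conflict-free with respect to closed neighborhoods.

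**The coloring.** First I would single out one element of the ground set, say the element $2k+1 \in [2k+1]$, and define the coloring $C : \binom{[2k+1]}{k} \to \{1,2\}$ by: $C(v) = 2$ if $2k+1 \in v$, and $C(v) = 1$ otherwise. Color class $C_2$ consists of the $k$-sets containing $2k+1$; these are in bijection with $\binom{[2k]}{k-1}$, so $|C_2| = \binom{2k}{k-1}$, and $|C_1| = \binom{2k}{k}$. The key structural observation is that $C_2$ is an independent set: any two $k$-subsets of $[2k+1]$ both containing the fixed element $2k+1$ share that element and hence are non-adjacent in $K(2k+1,k)$.

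**Verifying the conflict-free property.** I would check the two kinds of vertices separately. If $v \in C_2$ (so $2k+1 \in v$): since $C_2$ is independent, $v$ has no neighbor in $C_2$, so in its closed neighborhood $N[v]$ the only vertex of color $2$ is $v$ itself — hence $v$ is uniquely color-$2$ in $N[v]$, and we are done. If $v \in C_1$ (so $2k+1 \notin v$): here the closed neighborhood will contain many color-$1$ vertices, so I must find a unique color-$2$ neighbor. A color-$2$ neighbor of $v$ is a $k$-set $w$ with $2k+1 \in w$ and $w \cap v = \emptyset$; since $v \subseteq [2k] \setminus v$ has $|[2k]\setminus v| = k$, and $w$ must contain $2k+1$ together with $k-1$ elements disjoint from $v$, the elements of $w$ other than $2k+1$ must come from the $k$-element set $[2k+1]\setminus(v \cup \{2k+1\}) = [2k]\setminus v$. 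Thus a color-$2$ neighbor of $v$ is $\{2k+1\} \cup S$ where $S$ is a $(k-1)$-subset of the $k$-set $[2k]\setminus v$; there are exactly $\binom{k}{k-1} = k$ such sets. So when $k \geq 2$ this does \emph{not} give a unique neighbor — $v$ has $k$ neighbors of color $2$. This means the naive coloring fails for $k \geq 2$, and the main obstacle is precisely this: a symmetric $2$-coloring will tend to give either $0$ or many monochromatic neighbors, never exactly one.

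**Revised approach — the real difficulty.** Because of the obstruction above, I expect the actual proof breaks color class $C_2$ so that it is \emph{not} independent but is instead structured so every vertex sees exactly one vertex of one color in its closed neighborhood; equivalently, one should look for a subset $D \subseteq V(K(2k+1,k))$ such that for every vertex $v$, either $|N[v] \cap D| = 1$ or $|N[v] \cap (V \setminus D)| = 1$. A promising candidate is to take $D$ to be a single vertex $u_0$ together with its non-neighbor $\overline{u_0} = [2k+1]\setminus u_0$... but since $K(2k+1,k)$ is the odd graph where each vertex has exactly one non-neighbor, letting $D = \{u_0\}$ makes $u_0$ its own unique $D$-vertex, makes $\overline{u_0}$ have $|N[\overline{u_0}]\cap D| = 0$ so $\overline{u_0}$ needs the other color to be unique, which fails since $V\setminus D$ is huge near $\overline{u_0}$. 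So the size-$1$ trick also fails, and the genuine content is to find the right two-part partition. I would therefore search for a partition into $D$ and $V \setminus D$ where $D$ is chosen as an inclusion-maximal set such that $D$ is ``uniquely-hit'' from one side: concretely, pick $D$ to be \emph{all} vertices not containing element $2k+1$ that also avoid element $2k$, versus the rest, and recompute the neighbor counts; iterate the choice of the two distinguished elements until the parity works out. The hard part of the whole argument will be exhibiting the correct partition and then doing the (routine but fiddly) binomial-count check that every vertex has exactly one monochromatic neighbor on the appropriate side; I would organize that check by the intersection pattern of $v$ with the $2$–$3$ distinguished ground-set elements, handling the small cases $k=1$ (where $K(3,1) = K_3$ and any $2$-coloring with color classes of sizes $1$ and $2$ works) separately.
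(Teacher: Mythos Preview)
Your proposal does not actually contain a proof. You correctly diagnose that the single-distinguished-element coloring fails for $k\ge 2$, and your final suggestion---take one color class to be the $k$-sets avoiding \emph{two} fixed ground-set elements---is in fact exactly the coloring the paper uses. But you stop there, anticipating a ``routine but fiddly binomial-count check'' and a possible iterative search over distinguished elements. Neither is needed; the verification is three short lines, and the first two-element choice already works.

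Concretely (relabeling your $\{2k,2k+1\}$ as $\{1,2\}$): color $v$ with color~$2$ if $v\cap\{1,2\}=\emptyset$, and color~$1$ otherwise. Then $C_2\subseteq\binom{\{3,\dots,2k+1\}}{k}$ is an independent set, since any two $k$-subsets of a $(2k-1)$-set intersect; hence every $v\in C_2$ is the unique color-$2$ vertex in $N[v]$. If $\{1,2\}\subseteq v$, then every neighbour of $v$ avoids both $1$ and $2$ and so lies in $C_2$; thus $v$ is the unique color-$1$ vertex in $N[v]$. Finally, if $|v\cap\{1,2\}|=1$, say $1\in v$ and $2\notin v$, then a color-$2$ neighbour of $v$ must be a $k$-subset of $[2k+1]\setminus(v\cup\{2\})$, a set of size exactly $k$; so there is precisely one such neighbour. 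That is the whole argument. The gap in your write-up is simply that you never carried this out; once you commit to the two-element partition, there is no search and no binomial bookkeeping left to do.
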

\begin{proof}
Consider a vertex $v \in V(K(2k+1, k))$.
If $v\cap \{1,2\} \neq \emptyset$, we assign color 1 to $v$. Else, we assign color 2 to $v$.

Let $C_1$ and $C_2$ be the sets of vertices colored $1$ and $2$ respectively. Below, we 
discuss the unique colors for every vertex of $K(n,k)$.
\begin{itemize}
    \item If $v\in C_1$ and $\{1, 2\} \subseteq v$, then $v$ is the uniquely colored neighbor of itself. This is because all the
    vertices in $C_1$ contain either 1 or 2 and hence $v$ has no neighbors in $C_1$.
    
    \item Let $v\in C_1$ and $|v \cap \{1, 2\}| = 1$. W.l.o.g., let $1 \in v$ and $2 \notin v$.
    In this case, $v$ has a uniquely colored neighbor $w\in C_2$. The vertex $w$ is the
    $k$-set $w = [2k +1] \setminus (v \cup \{2\})$.
    
    \item If $w\in C_2$, $w$ is the unique color neighbor of itself. This is because 
    $C_2$ is an independent set. For two vertices 
    $w,w'\in C_2$ to be adjacent, we need $|w \cup w'| = 2k$, 
    but vertices in $C_2$ are subsets of $\{3, 4, 5, \ldots, 2k+1\}$, which has
    cardinality $2k-1$.
\end{itemize}
\qed
\end{proof}

\begin{lemma}\label{cfcnd+1}
$\chi_{CF-CN}(K(2k + d,k)) \leq d+1$, for all $k\geq 1$. 
\end{lemma}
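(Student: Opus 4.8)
The plan is to generalize the coloring idea used in Lemma \ref{cfcnk+1} and Lemma \ref{2k+1lemma}, but scaled to the ground set $[2k+d]$. The key observation from Lemma \ref{2k+1lemma} is that when the ground set has size $2k+1$, a $k$-set together with a single extra forbidden element still cannot fit two disjoint copies, so color classes defined by ``intersecting a small fixed set'' become independent, and independent color classes are self-conflict-free. For general $d$, the plan is to partition a reference region of the ground set into $d$ blocks and assign colors $1, \dots, d$ to vertices according to which block they first intersect (in some fixed order), reserving color $d+1$ for the vertices disjoint from the entire reference region. First I would set aside $d$ special elements, say pair them into blocks $B_1, \dots, B_d$ whose union $R = B_1 \cup \dots \cup B_d$ has size $2k+d - \text{(something)}$ chosen so that any two $k$-subsets of $R$ intersect — i.e.\ $|R| \leq 2k-1$ — making each of the color classes $C_1, \dots, C_d$ independent. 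Then every vertex colored $i \in \{1,\dots,d\}$ is its own uniquely colored neighbor, exactly as in Lemma \ref{cfcnk+1}.

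The substance is then handling the vertices colored $d+1$, i.e.\ the $k$-sets disjoint from $R$. These live inside the remaining $2k+d - |R|$ elements of the ground set. Here I would re-use the ``smallest $t$'' trick from Lemma \ref{cfcnk+1}: for such a vertex $v$, find the least $t$ with $|\{1,2,\dots,k+t\}\setminus v| = k$, and argue the set $\{1,\dots,k+t\}\setminus v$ is the unique neighbor of $v$ of color $t+1$, hence $v$ has a conflict-free neighbor. This requires the reference region and the ordering of colors to be laid out so that the "prefix" sets $\{1,\dots,k+t\}\setminus v$ always receive the intended colors, which is the bookkeeping part. A slightly cleaner route: split the ground set as $[2k+d] = L \sqcup R$ with $|L| = 2k$ (so $|R| = d$, and with $d \geq 1$ we may even take a slightly larger $L$), color $v$ by looking at $v \cap R$ — if $v \cap R \neq \emptyset$, color $v$ with the index of the least element of $R$ in $v$ (giving up to $d$ colors, all independent since... ) — wait, that needs $|R| \le 2k-1$ which may fail; so instead keep $|L| = 2k-1 + $ adjustment and carefully size $R$. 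The correct sizing is the one thing to nail down.

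The main obstacle I anticipate is exactly this sizing/consistency issue: making the color classes $C_1,\dots,C_d$ genuinely independent requires the region used to define them to have at most $2k-1$ elements, but then the complementary region where the color-$(d+1)$ vertices live has size $2k+d-(2k-1) = d+1 < 2k$ for $d < 2k-1$, which is too small to contain a $k$-set at all — so in fact one must be more clever and let color $d+1$ be handled by the \emph{closed} neighborhood (a vertex colored $d+1$ with no neighbor colored $d+1$ is fine, but if it has many such neighbors we need a distinct color among them, which is where Lemma \ref{cfcnk+1}'s argument is needed inside the big block). So the real plan is: take $L = \{1,\dots,2k-1\}$ colored as in Lemma \ref{cfcnk+1} using colors $1,\dots,k$ — but that already uses $k$ colors, which is more than $d+1$ when $d < k-1$, contradicting the bound. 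Hence for the regime $d \le k$ the bound $d+1 \le k+1$ is only interesting when $d$ is small, and the construction must instead use the ground-set elements beyond $2k$ sparingly: designate $d$ ``marker'' elements $m_1, \dots, m_d$, color $v$ by the index of the smallest marker it contains (colors $1,\dots,d$), and color $v$ by $d+1$ if it contains no marker. A color class $C_i$ (for $i\le d$) consists of $k$-sets containing $m_i$ but not $m_1,\dots,m_{i-1}$; two such disjoint sets both contain $m_i$, impossible, so $C_i$ is independent and self-conflict-free. For $v$ colored $d+1$: $v$ avoids all markers, so $v \subseteq [2k+d]\setminus\{m_1,\dots,m_d\}$, a set of size $2k$; now apply the Lemma \ref{cfcnk+1}-style argument \emph{within this size-$2k$ set}, exhibiting a neighbor of $v$ (also avoiding markers, hence colored $d+1$ or... ) — and here is the snag, that neighbor might be colored $\le d$. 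The resolution is to pick the markers to be the \emph{last} $d$ elements $\{2k+1,\dots,2k+d\}$ and for a color-$(d+1)$ vertex $v \subseteq [2k]$ use $w = [2k]\setminus v$ wait $|[2k]\setminus v| = k$, good, and $w \subseteq [2k]$ avoids all markers so $w$ is also colored $d+1$; but is $w$ the \emph{unique} $d+1$-colored neighbor of $v$? Not necessarily. I would therefore instead mimic exactly the $t(v)$ construction restricted to $[2k]$: let $t$ be least with $|\{1,\dots,k+t\}\setminus v|=k$ (so $t \le k$, and when $v\subseteq[2k]$ we get $t\le k$), and check $\{1,\dots,k+t\}\setminus v$ is the unique neighbor among a designated sub-coloring. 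Honestly, the cleanest fix is: do \emph{not} try to keep $C_{d+1}$ usable for its own members via uniqueness inside itself; instead give the size-$2k$ marker-free block its own recursive/explicit CF-CN coloring using only... no. \textbf{I will present the marker construction and, for the color-$(d+1)$ block $[2k+d]\setminus\{\text{markers}\}$ of size $2k$, invoke the following self-contained sub-argument:} order this block as $\{1,\dots,2k\}$; for $v$ in it, $w := \{1,\dots,2k\}\setminus v$ is a $k$-set disjoint from $v$, both colored $d+1$; among all $d+1$-colored neighbors of $v$, the particular one $w$ is distinguished because $w$ is the unique $k$-subset of $\{1,\dots,2k\}$ disjoint from $v$ — any $k$-set disjoint from $v$ and contained in a $2k$-element superset equals the complement — and every $d+1$-colored vertex lies in this fixed $2k$-element block, so $w$ is literally the only neighbor of $v$ of color $d+1$. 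That closes it, and the only obstacle left is verifying $v\subseteq\{1,\dots,2k\}$ forces its $d{+}1$-colored neighbors to also lie in $\{1,\dots,2k\}$ — true since all $d{+}1$ vertices do — so $v$'s $d{+}1$-neighbors are exactly $\{w\}$, done.
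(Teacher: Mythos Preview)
Your final construction --- designate the markers $\{2k+1,\dots,2k+d\}$, color a vertex $v$ by the index of its smallest marker (colors $1,\dots,d$), and color $v$ with $d+1$ if $v\subseteq[2k]$ --- does make the classes $C_1,\dots,C_d$ independent, and it is true that for $v\subseteq[2k]$ the set $w=[2k]\setminus v$ is the \emph{only} neighbor of $v$ colored $d+1$. But this is a \textbf{closed}-neighborhood problem: $v$ itself sits in $N[v]$ and is also colored $d+1$, so color $d+1$ occurs exactly twice in $N[v]$, not once. You then need some color $i\le d$ to be unique in $N[v]$, and that fails in general: the number of $C_i$-neighbors of such a $v$ is $\binom{k+d-i}{k-1}$, which for $i=d$ equals $k$, and the other classes are no better. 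Concretely, for $K(5,2)$ with marker $\{5\}$, take $v=\{1,2\}$; then $N[v]=\{\{1,2\},\{3,4\},\{3,5\},\{4,5\}\}$ has two vertices of each color, so $v$ has no uniquely colored vertex in its closed neighborhood. The construction is therefore not a valid CF-CN coloring.

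The paper's argument avoids this by a short induction on $d$ with Lemma~\ref{2k+1lemma} as base case: given a CF-CN $(d+1)$-coloring of $K(2k+d,k)$, keep it on all subsets of $[2k+d]$ and give the new color $d+2$ to every vertex containing $2k+d+1$. The new class is independent (hence each new vertex is its own witness), and every old vertex keeps its former witness since all newly acquired neighbors carry only the new color. The point is that the base case coloring of $K(2k+1,k)$ is \emph{not} of the ``independent classes'' type you were aiming for: its class $C_1$ is not independent, and the witnesses for $C_1$-vertices with $|v\cap\{1,2\}|=1$ lie in $C_2$, not in $C_1$. Your marker idea would work for the open-neighborhood variant, but for CF-CN you need the more delicate base case.
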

\begin{proof}
We prove this by induction on $d$. 
The base case is when $d=1$ which is true from Lemma~\ref{2k+1lemma}. 
Suppose $K(2k+d, k)$ has a \cfcn coloring with $d+1$ colors.
Let us consider $K(2k+d+1, k)$. For all the vertices
of $K(2k+d+1, k)$ that appear in $K(2k+d, k)$
we use the same coloring as in $K(2k+d, k)$.
The new vertices (the vertices that contain 
$2k + d +1$) are assigned the new color $d+2$.
As all the new vertices contain $2k+d+1$, they 
form an independent set. Hence each of the new 
vertices serve as their own uniquely colored
neighbor. 

The vertices of $K(2k+d+1, k)$ already present in $K(2k+d, k)$ get new neighbors, but all the new neighbors 
are colored with the new color $d+2$. Hence the 
 unique color of the existing vertices
are retained.
\qed
\end{proof}

So, from Lemma~\ref{cfcnk+1} and Lemma~\ref{cfcnd+1} we get the following. 

  \[
    \chi_{CF-CN}[K(n,k)]  \leq \left\{\begin{array}{lr}
        n-2k+1, & \text{for } 2k+1\leq n\leq 3k 
        \\
        k+1, & \text{for } n\geq 3k+1\\
        \end{array}\right\}.
  \]

\section{Discussion}
We note a few directions that are left open by this paper:
\begin{itemize}
    \item We showed that a planar graph has a partial \cf coloring
    that uses at most 5 colors. The best known lower bound is 4 colors.
    \item Along similar lines, an outerplanar graph can be partial
    \cf colored using 4 colors, while the lower bound is 3.
    \item We showed that the complete \cf chromatic number of $K(n,k)$
    is $k+2$ when $n \geq k(k+1)^2 + 1$. We believe this requirement on $n$ 
    can be relaxed.
\end{itemize}

\noindent \textbf{Acknowledgments:} We would like to thank I. Vinod Reddy for suggesting the problem, 
Karthik R. for initial discussions and 
Rogers Mathew for the proof of Theorem~\ref{lowerbound}.



\bibliographystyle{elsarticle-num}

\bibliography{BibFile}

\begin{thebibliography}{10}
\expandafter\ifx\csname url\endcsname\relax
  \def\url#1{\texttt{#1}}\fi
\expandafter\ifx\csname urlprefix\endcsname\relax\def\urlprefix{URL }\fi
\expandafter\ifx\csname href\endcsname\relax
  \def\href#1#2{#2} \def\path#1{#1}\fi

\bibitem{Even2002}
G.~Even, Z.~Lotker, D.~Ron, S.~Smorodinsky, Conflict-free colorings of simple
  geometric regions with applications to frequency assignment in cellular
  networks, SIAM J. Comput. 33~(1) (2004) 94--136.
\newblock \href {http://dx.doi.org/10.1137/S0097539702431840}
  {\path{doi:10.1137/S0097539702431840}}.

\bibitem{smorosurvey}
S.~Smorodinsky, Conflict-Free Coloring and its Applications, Springer Berlin
  Heidelberg, Berlin, Heidelberg, 2013, pp. 331--389.

\bibitem{planar}
Z.~Abel, V.~Alvarez, E.~D. Demaine, S.~P. Fekete, A.~Gour, A.~Hesterberg,
  P.~Keldenich, C.~Scheffer, Three colors suffice: Conflict-free coloring of
  planar graphs, in: Proceedings of the Twenty-Eighth Annual ACM-SIAM Symposium
  on Discrete Algorithms, SODA '17, Society for Industrial and Applied
  Mathematics, Philadelphia, PA, USA, 2017, pp. 1951--1963.

\bibitem{gargano2015}
L.~Gargano, A.~A. Rescigno, Complexity of conflict-free colorings of graphs,
  Theoretical Computer Science 566 (2015) 39 -- 49.
\newblock \href {http://dx.doi.org/https://doi.org/10.1016/j.tcs.2014.11.029}
  {\path{doi:https://doi.org/10.1016/j.tcs.2014.11.029}}.

\bibitem{vinod2017}
I.~V. Reddy, Parameterized algorithms for conflict-free colorings of graphs,
  Theor. Comput. Sci. 745 (2018) 53--62.
\newblock \href {http://dx.doi.org/10.1016/j.tcs.2018.05.025}
  {\path{doi:10.1016/j.tcs.2018.05.025}}.

\bibitem{Boe2019}
H.~L. Bodlaender, S.~Kolay, A.~Pieterse, Parameterized complexity of
  conflict-free graph coloring, in: Algorithms and Data Structures - 16th
  International Symposium, {WADS} 2019, Edmonton, AB, Canada, August 5-7, 2019,
  Proceedings, 2019, pp. 168--180.
\newblock \href {http://dx.doi.org/10.1007/978-3-030-24766-9\_13}
  {\path{doi:10.1007/978-3-030-24766-9\_13}}.

\bibitem{Sak2019}
A.~Agrawal, P.~Ashok, M.~M. Reddy, S.~Saurabh, D.~Yadav, {FPT} algorithms for
  conflict-free coloring of graphs and chromatic terrain guarding, CoRR
  abs/1905.01822.
\newblock \href {http://arxiv.org/abs/1905.01822} {\path{arXiv:1905.01822}}.

\bibitem{Pach2009}
J.~Pach, G.~Tardos, Conflict-free colourings of graphs and hypergraphs,
  Combinatorics, Probability and Computing 18~(5) (2009) 819--834.
\newblock \href {http://dx.doi.org/10.1017/S0963548309990290}
  {\path{doi:10.1017/S0963548309990290}}.

\bibitem{Keller2018}
C.~Keller, S.~Smorodinsky, Conflict-free coloring of intersection graphs of
  geometric objects, in: SODA, 2017.

\bibitem{pcheilaris2009}
P.~Cheilaris, Conflict-free coloring, Ph.D. thesis, New York, NY, USA,
  aAI3344962 (2009).

\bibitem{sandor2017}
S.~P. Fekete, P.~Keldenich, Conflict-free coloring of intersection graphs,
  International Journal of Computational Geometry \& Applications 28~(03)
  (2018) 289--307.
\newblock \href {http://dx.doi.org/10.1142/S0218195918500085}
  {\path{doi:10.1142/S0218195918500085}}.

\bibitem{chen2005}
K.~Chen, A.~Fiat, H.~Kaplan, M.~Levy, J.~Matou\v{s}ek, E.~Mossel, J.~Pach,
  M.~Sharir, S.~Smorodinsky, U.~Wagner, E.~Welzl, Online conflict-free coloring
  for intervals, SIAM J. Comput. 36~(5) (2006) 1342--1359.
\newblock \href {http://dx.doi.org/10.1137/S0097539704446682}
  {\path{doi:10.1137/S0097539704446682}}.

\bibitem{kneser}
M.~Kneser, Aufgabe 360, Jahresbericht der Deutschen Mathematiker-Vereinigung
  58~(2).

\bibitem{lovasz}
L.~Lov\'asz, Kneser's conjecture, chromatic number, and homotopy, Journal of
  Combinatorial Theory, Series A 25~(3) (1978) 319 -- 324.
\newblock \href
  {http://dx.doi.org/https://doi.org/10.1016/0097-3165(78)90022-5}
  {\path{doi:https://doi.org/10.1016/0097-3165(78)90022-5}}.

\bibitem{Diestel}
R.~Diestel, Graph Theory, Springer-Verlag Heidelberg, 2005.

\bibitem{4ct}
N.~Robertson, D.~Sanders, P.~Seymour, R.~Thomas, The four-colour theorem, J.
  Comb. Theory Ser. B 70~(1) (1997) 2--44.
\newblock \href {http://dx.doi.org/10.1006/jctb.1997.1750}
  {\path{doi:10.1006/jctb.1997.1750}}.

\bibitem{Aubry2016}
Y.~Aubry, J.-C. Godin, O.~Togni, Free choosability of outerplanar graphs,
  Graphs and Combinatorics 32~(3) (2016) 851--859.
\newblock \href {http://dx.doi.org/10.1007/s00373-015-1625-3}
  {\path{doi:10.1007/s00373-015-1625-3}}.

\bibitem{outerplanar}
S.~L. Mitchell, Linear algorithms to recognize outerplanar and maximal
  outerplanar graphs, Information Processing Letters 9~(5) (1979) 229 -- 232.
\newblock \href
  {http://dx.doi.org/https://doi.org/10.1016/0020-0190(79)90075-9}
  {\path{doi:https://doi.org/10.1016/0020-0190(79)90075-9}}.

\end{thebibliography}

\end{document}